\PassOptionsToPackage{table}{xcolor}
\documentclass[11pt]{article} 
\usepackage[utf8]{inputenc}
\DeclareUnicodeCharacter{2731}{\textasteriskcentered}
\usepackage{fullpage}

\usepackage{times}
\usepackage{tikz}
\usepackage{mathtools}
\usepackage{amsthm, amssymb}
\usepackage{bbm}
\usepackage{verbatim}
\usepackage{enumitem}
\usepackage{xspace}
\usepackage{amsfonts,mathrsfs,color}
\usepackage{xcolor}
\usepackage{graphicx}
\usepackage{booktabs}
\usepackage{nicefrac}
\usepackage{mdframed}
\usepackage{contour}
\contourlength{0.1pt}
\contournumber{10}
\usepackage[
  backend=biber,
  style=alphabetic,
  maxbibnames=10,
  maxalphanames=5,
  minalphanames=5,
  maxcitenames=4,
  natbib=true,
  url=false
]{biblatex}
\usepackage{multirow}
\usepackage[ruled,linesnumbered,noend]{algorithm2e} % For algorithms
\SetKwComment{Comment}{/* }{ */}
\SetKwProg{Fn}{function}
\SetAlFnt{\relax}
\SetAlCapFnt{\relax}
\SetAlCapNameFnt{\relax}
\SetAlCapHSkip{0pt}
\usepackage{tikz}
\usepackage{xcolor}

\newcommand{\declarecolor}[2]{\definecolor{#1}{RGB}{#2}\expandafter\newcommand\csname #1\endcsname[1]{\textcolor{#1}{##1}}}
\declarecolor{White}{255, 255, 255}
\declarecolor{Black}{0, 0, 0}
\declarecolor{Maroon}{128, 0, 0}
\declarecolor{Coral}{255, 127, 80}
\declarecolor{Red}{182, 21, 21}
\declarecolor{LimeGreen}{50, 205, 50}
\declarecolor{DarkGreen}{0, 80, 0}
\declarecolor{Purple}{146, 42, 158}
\declarecolor{Navy}{0, 0, 128}
\declarecolor{LightBlue}{84, 101, 202}
\usepackage{hyperref}
\hypersetup{
    colorlinks,
    citecolor=DarkGreen,
    linkcolor=LightBlue}
\usepackage[noabbrev, capitalise, nameinlink]{cleveref}

\newtheorem{theorem}{Theorem}
\newtheorem*{theorem*}{Theorem}
\newtheorem{lemma}{Lemma}

\newtheorem{corollary}{Corollary}
\newtheorem{example}{Example}

\newtheorem*{informal theorem*}{Informal Theorem}

\newcommand{\reg}{\mathrm{Regret}}
\newcommand{\Rev}{\mathrm{Rev}}
\newcommand{\Myer}{\mathrm{Myer}}
\newcommand{\Lreg}{\mathrm{LRegret}}

\usepackage{pifont}

\DeclareMathOperator*{\argmin}{argmin}
\DeclareMathOperator*{\argmax}{argmax}

\def\+#1{\mathcal{#1}}
\def\-#1{\mathbb{#1}}

\newcommand{\notshow}[1]{{}}
\newcommand{\AutoAdjust}[3]{{ \mathchoice{ \left #1 #2  \right #3}{#1 #2 #3}{#1 #2 #3}{#1 #2 #3} }}
\newcommand{\Xcomment}[1]{{}}

\newcommand{\InParentheses}[1]{\AutoAdjust{(}{#1}{)}}
\newcommand{\InBrackets}[1]{\AutoAdjust{[}{#1}{]}}
\newcommand{\InAngles}[1]{\AutoAdjust{\langle}{#1}{\rangle}}
\newcommand{\InNorms}[1]{\AutoAdjust{\|}{#1}{\|}}
\renewcommand{\part}[2]{\frac{\partial #1}{\partial #2}}

\newcommand{\hF}{\widehat{F}}

\newcommand{\tF}{\Tilde{F}}

\foreach \x in {A,...,Z}{%
    \expandafter\xdef\csname m\x\endcsname{\noexpand\mathbf{\x}}
}
\foreach \x in {a,...,t,v,w,x,y,z}{%
    \expandafter\xdef\csname m\x\endcsname{\noexpand\boldsymbol{\x}}
}

\newcommand{\Comments}{1}
\let\todo\undefined 
\usepackage[colorinlistoftodos,textsize=tiny]{todonotes}
\newcommand{\mytodo}[2]{\ifnum\Comments=1%
  \todo[linecolor=#1!80!black,backgroundcolor=#1,bordercolor=#1!80!black]{#2}\fi}

\ifnum\Comments=1               % fix margins for todonotes
\else
\fi

\title{Is Online Linear Optimization Sufficient for Strategic Robustness?\thanks{Authors are listed in alphabetical order.}}
\author{
Yang Cai\thanks{Yale University. Email: \texttt{yang.cai@yale.edu}}\\
\and 
Haipeng Luo\thanks{University of Southern California. Email: \texttt{haipengl@usc.edu}} \\
\and
Chen-Yu Wei\thanks{University of Virginia. Email: \texttt{chenyu.wei@virginia.edu}}\\
\and 
Weiqiang Zheng\thanks{Yale University. Email: \texttt{weiqiang.zheng@yale.edu}}\\
}

\begin{document}

\maketitle

\begin{abstract}%
    We consider bidding in repeated Bayesian first-price auctions. Bidding algorithms that achieve optimal \emph{regret} have been extensively studied, but their \emph{strategic robustness} to the seller's manipulation remains relatively underexplored. Bidding algorithms based on no-swap-regret algorithms achieve both desirable properties, but are suboptimal in terms of statistical and computational efficiency.  In contrast, online gradient ascent is the only algorithm that achieves $O(\sqrt{TK})$ regret and strategic robustness~\citep{kumar2024strategically}, where $T$ denotes the number of auctions and $K$ the number of bids. 

    In this paper, we explore whether simple online linear optimization (OLO) algorithms suffice for bidding algorithms with both desirable properties. Our main result shows that sublinear \emph{linearized regret} is sufficient for strategic robustness. Specifically, we construct simple black-box reductions that convert any OLO algorithm into a strategically robust no-regret bidding algorithm, in both known and unknown value distribution settings. For the known value distribution case, our reduction yields a bidding algorithm that achieves $O(\sqrt{T \log K})$ regret and strategic robustness (with exponential improvement on the $K$-dependence compared to~\citep{kumar2024strategically}). 
    For the unknown value distribution case, our reduction gives a bidding algorithm with high-probability $O(\sqrt{T (\log K+\log(T/\delta)})$ regret and strategic robustness, while removing the bounded density assumption made in~\citep{kumar2024strategically}.
\end{abstract}

\newpage
\thispagestyle{empty}
\setcounter{tocdepth}{3}
\tableofcontents
\addtocounter{page}{-2}
%\thispagestyle{empty}
% \begin{keywords}%
%   List of keywords%
% \end{keywords}
\newpage
\section{Introduction}
The first-price auction is arguably the most basic and widely used mechanism for allocating scarce resources.
Recently, first-price auctions have become the dominant auction format in online advertising markets such as Google Ad Exchange~\citep{wong2021first}, replacing the previously popular second-price auction and its generalizations. In a first-price auction, a buyer submits a bid and pays her bid if she wins. Unlike second-price auctions, where truthful bidding is a dominant strategy, bidding in first-price auctions requires non-trivial strategies that shade bids below their true values. Moreover, the massive scale of online ad auctions, with advertisers typically participating in millions of auctions, calls for automated bidding algorithms that can learn from past data and adapt their strategies over time to maximize long-term utility~\citep{aggarwal2024auto}. Designing efficient online bidding algorithms for repeated first-price auctions has thus become an important problem in both theory and practice, attracting significant attention from the computer science, operations research, and economics communities.

One key property that an online bidding algorithm should satisfy is \emph{no-regret}: the algorithm's average utility should asymptotically be no worse than that of the best fixed bidding strategy in hindsight. A recent line of work has made significant progress, developing no-regret bidding algorithms for first-price auctions across various settings using insights from the online learning literature~\citep{balseiro2023contextual, han2020learning, han2025optimal, zhang2022leveraging, wang2023learning, badanidiyuru2023learning, cesa2024role}. In addition to vanishing regret, an equally important property of an algorithm is its \emph{robustness to manipulations}: a strategic seller cannot exploit the algorithm and extract average revenue higher than the optimal single-shot mechanism (posting the monopoly reserve price). %Without strategic robustness, the seller may adapt the mechanism to manipulate the buyer's algorithm and extract surplus that should belong to the buyer~\citep{braverman2018selling}. 

On the one hand, achieving strategic robustness alone is straightforward: the buyer can bid $0$ in every auction, thereby reducing the seller's revenue to $0$, but such a strategy can incur large regret (and yields poor utility). On the other hand, online learning has developed a rich toolbox for minimizing regret, most notably via the framework of online linear optimization (OLO)~\citep{hazan2016introduction, orabona2019modern}. These algorithms can be applied in our setting to obtain no-regret guarantees. A natural question is whether we can also leverage them to achieve strategic robustness.
\begin{equation}
    \textit{Is online linear optimization (OLO) sufficient for strategic robustness?} \tag{$\star$}
\end{equation}

%Achieving no-regret is relatively easy given the extensive literature on online learning algorithms. Achieving strategic robustness on its own is also easy: the buyer can simply bid 0 in all auctions, thereby reducing the seller's revenue to 0,  but such a strategy obviously also hurts the buyer's utility. Therefore, a challenging task is to design bidding algorithms that achieve no-regret and strategic robustness \textit{simultaneously}. 
%In this paper, we study the following fundamental question:

\paragraph{Strategically-Robust No-Regret Bidding Algorithms} Below, we provide additional background for this question. We consider bidding in repeated Bayesian first-price auctions with a single buyer, who has an (possibly unknown) absolutely continuous value distribution $F$ over $[0,1]$ a discrete, equally spaced bid grid $0 = b_0 < b_1 < \ldots < b_K \le 1$. In each auction $t \in [T]$, the buyer observes a value i.i.d. drawn from $F$ and then chooses a bid. We assume full feedback: after each auction, the buyer observes %the maximum of the highest competing bid and 
the reserve price. The (external) regret is defined with respect to the best fixed bidding strategy in hindsight (i.e., a mapping from values to bids). We say an algorithm is $g(T)$-strategically robust if the seller's revenue is at most $T \cdot \Myer(F) + g(T)$, where $\Myer(F)$ denotes the revenue of the optimal single-shot mechanism. Our goal is to achieve sublinear regret and strategic robustness simultaneously. 

\citet{braverman2018selling}
were the first to study the strategic robustness in auctions. They consider the contextual setting where the buyer has $m$ distinct values (contexts). The buyer runs $m$ no-regret algorithms, one for each value, and the resulting algorithm is \emph{no-contextual-regret}. In this setting, they show that online linear optimization (OLO) is \emph{insufficient} for strategic robustness. Specifically, they consider a large class of algorithms called \emph{mean-based} algorithms, which includes many OLO algorithms such as Multiplicative Weights Update~\citep{littlestone1994weighted,freund1997decision, arora2012multiplicative} and Follow-the-Regularized-Leader. They show that the seller can easily exploit any mean-based algorithm to extract more revenue than the optimal mechanism. To guarantee strategic robustness, they design algorithms that minimize the \emph{contextual-swap-regret}, a stronger notion than the standard external regret. However, their approach requires complicated online algorithms and is suboptimal when adapted to the continuous value setting: (1) the algorithm suffers $\Theta(T^{7/8})$-regret, worse than the $O(\sqrt{T})$ regret achieved by other algorithms; (2) the algorithm requires super-constant $\Omega(T^{1/8})$ amount of computation in each round.

Recently, \citet{kumar2024strategically} propose an alternative approach. They introduce a novel \emph{concave} formulation of pure-strategy bidding in Bayesian first-price auctions using the quantile strategy space. This concave formulation reduces the bidding problem to an online concave optimization problem, which further reduces to an online linear optimization (OLO) problem (see \eqref{eq:concave-linear} below).
They then show that applying the classic and simple online gradient ascent (OGA) algorithm~\citep{zinkevich2003online} gives a bidding algorithm with both optimal $O(\sqrt{T})$-regret and strategic robustness, improving upon \citep{braverman2018selling} in terms of both the regret guarantee and computational efficiency.

However, our understanding of the landscape of strategically robust no-regret bidding algorithms for Bayesian first-price auctions is far from complete. Although \citet{kumar2024strategically} reduce the problem to OLO, they still need a highly specific analysis for OGA, and OGA remains the only algorithm known to achieve both $O(\sqrt{T})$ regret and strategic robustness. 
This deepens the mystery surrounding strategically robust no-regret bidding algorithms for first-price auctions:
\begin{center}
    \textit{Is OGA the only algorithm that achieves both subliear regret and sublinear strategic robustness?}\\
    \textit{Can we apply other OLO algorithms and achieve both properties?} 
\end{center}

\subsection{Our Contributions}
We answer these questions affirmatively by providing a simple, unified black-box reduction that converts any OLO algorithm into a strategically robust no-regret bidding algorithm. Moreover, our reduction enables bidding algorithms with improved regret and strategic robustness in both the known and unknown value distribution settings (i.e., the buyer does not know her value distribution), significantly improving upon~\citep{kumar2024strategically}. Below, we provide a brief overview of our results.

\paragraph{Sublinear Linearized Regret Implies Strategic Robustness} We first consider the setting of known value distribution and the concave formulation by~\citet{kumar2024strategically}.\footnote{We modify the original formulation to get better dependence on $K$. See \Cref{sec:technical overview} for a discussion.} We show that any OLO algorithm can be applied to achieve both sublinear regret and sublinear strategic robustness. The first implication is immediate as online learning with concave utilities $\{u_t\}_{t=1}^T$ reduces to online learning with linear utilities by concavity: we denote $\mp_t$ the strategy (defined in \Cref{sec:concave formulation}) in iteration $t$, then
\begin{align}\label{eq:concave-linear}
    \max_{\mp} \sum_{t=1}^T \InParentheses{u_t(\mp_t) - u_t(\mp)} \le \max_{\mp} \sum_{t=1}^T \InAngles{\nabla u_t(\mp_t), \mp_t - \mp}.
\end{align} 
The left-hand side term in \eqref{eq:concave-linear} is the standard (external) regret. We refer the right-hand side term in \eqref{eq:concave-linear} as \emph{linearized regret}, which is the regret over the linearized utilities $\{\InAngles{\nabla u_t(\mp_t), \cdot}\}_{t=1}^T$. Our contribution is the second implication that minimizing linearized regret also suffices for strategic robustness. In fact, we show that minimizing linearized regret against \emph{one particular strategy} suffices, and this particular strategy is the one that bids 0 for all values. As a result, we obtain a reduction from strategically robust no-regret bidding algorithms to OLO algorithms, thereby significantly broadening the class of such robust bidding algorithms.

\begin{informal theorem*}
Given the value distribution and any OLO algorithm with $g(T)$-regret over the simplex, one can construct a bidding algorithm with $g(T)$-regret and $g(T)$-strategic robustness.
\end{informal theorem*}

Recall that $K$ is the number of positive bids. We show that our reduction with the Multiplicative Weights Update (MWU) algorithm~\citep{littlestone1994weighted, freund1997decision, arora2012multiplicative} gives a bidding algorithm with $O(\sqrt{T\log K})$ regret and $O(\sqrt{T\log K})$ strategic robustness, exponentially improving the $O(\sqrt{TK})$ bound of OGA~\citep{kumar2024strategically}. This improvement is of particular interest when the number of bids $K$ is not given exogenously but is determined by discretizing a continuous bid space $[0,1]$. In the typical case, discretization brings an additional $\frac{T}{K}$ error. OGA then suffers a suboptimal $O(T^{2/3})$ regret even with optimal tuning of $K$, while our algorithm still ensures $O(\sqrt{T \log T})$ regret.  

\paragraph{Sublinear Regret is Insufficient for Strategic Robustness}

In our reduction, we forward the linearized utilities $\{\InAngles{ \nabla u_t(\mp_t), \cdot}\}_{t=1}^T$ to the OLO algorithm to guarantee that the resulting bidding algorithm minimizes the linearized regret. We show that this is also necessary: \emph{achieving no-regret in the original concave utilities $\{u_t(\cdot)\}_{t=1}^T$ is insufficient for strategic robustness.} In particular, we construct an example where the buyer's bidding strategies enjoy \emph{negative} regret, but the seller can extract $T\cdot \Myer(F) + \Omega(T)$ revenue.

\paragraph{Unknown Value Distribution} We further consider the more challenging setting where the value distribution $F$ is unknown to the buyer. Under our concave formulation, the gradient of the utility function depends on $F$, and therefore cannot be computed when $F$ is unknown.
%Since $F$ is unknown, we can no longer compute the gradient of the concave utility function and run an OLO algorithm directly. 
\citet{kumar2024strategically} circumvent this barrier by implementing OGA while treating values as if they were uniformly distributed, and thus incur $O\!\left(\bar{f}^{1/2}K\sqrt{T}\right)$ regret, where $\bar{f}:=\max_{x \in [0,1]} f(x)$ is an upper bound on the density function $f$ of $F$. However, the dependence on $\bar{f}$ is unsatisfactory as it can be arbitrarily large. Moreover, their analysis is complex and tailored to OGA.  
Can we address these limitations simultaneously, ideally using a simple and unified analysis, just as in the known value distribution case?
%Can we design an algorithm for the unknown value distribution that works without the bounded density assumption and has sublinear regret and sublinear strategic robustness?

We resolve this question with a strong affirmative answer. We provide a black-box reduction that converts any OLO algorithm into a strategically robust no-regret bidding algorithm in the unknown value distribution setting.
\begin{informal theorem*}
Given any OLO algorithm with $g(T)$-regret over the simplex, one can construct a bidding algorithm for the unknown value distribution setting such that with probabillity $1- \delta$, the algorithm enjoys an $O(g(T) + \sqrt{T\log (T/\delta)})$ bound for both regret and strategic robustness.
\end{informal theorem*}
Our results thus show that OLO is sufficient for strategically robust no-regret bidding in first-price auctions even in the unknown value distribution setting. Applying our reduction together with the MWU algorithm gives a bidding algorithm with a high-probability guarantee of $O(\sqrt{T (\log K+\log (T/\delta)})$-regret and strategic robustness, which again improves upon \citep{kumar2024strategically} on the $K$ dependence while holding in the more general setting without the bounded density assumption. 

\subsection{Technical Overview}\label{sec:technical overview}
We give a high-level overview of our techniques in this section. 

\paragraph{Linearized Regret and New Concave Formulation} For the result that sublinear linearized regret implies strategic robustness, our proof is inspired by~\citep{kumar2024strategically}. To demonstrate that OGA is strategically robust, they present a simple, illustrative proof sketch for the continuous-time OGA (i.e., the gradient flow) and a formal, but substantially more complex, proof for the discrete-time OGA algorithm. We observe that their proof of continuous-time OGA applies to all discrete-time algorithms with low linearized regret. Specifically, we use a simple analysis to show that, for any buyer's bidding strategy, the seller's revenue is upper-bounded by $\Myer(F)$ plus the strategy's linearized regret against the all-0 strategy. Thus, sublinear linearized regret implies strategic robustness. 

Moreover, to achieve improved regret/strategic robustness guarantees, we propose a new concave formulation. Our new formulation is a reparameterization of the quantile strategy space in~\citep{kumar2024strategically}, which is a general convex polytope. Instead, we represent a bidding strategy (a mapping from values to bids) by its bidding probabilities (the probability of each bid). As a result, the strategy space is the probability simplex, where each coordinate corresponds to the probability of a bid under the buyer's (possibly unknown) value distribution. Our new formulation has the advantage that the simplex is a much simpler convex set with nice geometric properties, which leads to the improved $O(\sqrt{T\log K})$ bound for both the regret and strategic robustness. %regret, improving their $O(\sqrt{TK})$ bound.

\paragraph{Handling Unknown Value Distribution} When the value distribution $F$ is unknown, the only known approach is due to~\citet{kumar2024strategically}, which effectively runs the algorithm as though $F$ were uniform. However, this approach unavoidably incurs a dependence on the maximum density $\bar{f}$. We take a direct yet more sophisticated approach that estimates the value distribution from received value samples. We then run an OLO algorithm based on the estimated distribution.

A natural choice is to use the empirical distribution of samples $F_t$ (up to time $t$) and run an OLO algorithm assuming that the true distribution is $F_t$. While the DKW inequality (\Cref{lem: DKW}) guarantees that $F_t$ approximates $F$ well, this estimation leads to several technical issues: (1) The empirical distribution $F_t$ is discrete and not continuous. This causes $\Omega(K)$ error when we transit between 
bidding probabilities (the output of an OLO algorithm) to a bidding strategy (the output of the resulting bidding algorithm). This $\Omega(K)$ error ruins the $O(\sqrt{\log K})$ dependence. (2) 
We can prove that the seller's revenue is bounded by $\Myer(F_t)$ (plus linearized regret), but  $\Myer(F_t)$ may be larger than $\Myer(F)$. This prevents us from establishing strategic robustness. 

To overcome these technical issues, we propose a distribution estimation called \emph{Dominated Continuous Empirical Distribution} $\hat{F}_t$. This distribution is obtained from the empirical distribution $F_t$ by two steps. We first linearly interpolate the empirical values to get an \emph{absolutely continuous} empirical distribution $\tF_t$, which ensures lossless translation between bidding probabilities and bidding strategies. Building on $\tF_t$, we further move certain probability mass to $0$ to construct the dominated distribution $\hF_t$. This step is to guarantee that $F$ stochastic dominates $\hF_t$, which implies $\Myer(F) \ge \Myer(\hF_t)$ by revenue monotonicity. Moreover, $\hF_t$ is ``almost'' absolutely continuous except for the small probability mass on $0$. Thus, we can bound the transition error between bidding probabilities and bidding strategies by a term independent of $K$. Equipped with several stability bounds on the utility and revenue functions, our dominated continuous empirical distribution enables a reduction that incurs only an additional $O(\sqrt{T\log(T/\delta)})$ loss on regret and strategic robustness. Crucially, this additional regret overhead does \textit{not} depend on $K$. Although similar techniques for constructing dominated empirical distributions have appeared in the literature on learning optimal auctions from samples (see, e.g.,~\citep{roughgarden2016ironing,guo2019settling,brustle2020multi}), to our knowledge this is the first time such a technique has been applied in the context of strategic robustness.

\subsection{Additional Related Works} 
\citet{braverman2018selling} were the first to study strategically-robust algorithms in auctions.%, algorithms that are non-manipulable against a strategic seller.
~\citep{deng2019strategizing} considers general two-player normal-form games, showing that mean-based algorithms are manipulable while no-swap-regret algorithms are strategically-robust. Since then, the problem of strategizing against a no-regret learning algorithm has been studied in many economic settings, such as Bayesian games and polytope games~\citep{mansour2022strategizing, arunachaleswaran2025swap}, auctions~\citep{cai2023selling, rubinstein2024strategizing}, contract design~\citep{guruganesh2024contracting}, information design~\citep{jain2024calibrated, yang2024computational}, and general principal-agent problems~\cite{lin2025generalized}. Relatedly, ~\citep{arunachaleswaran2024pareto} shows that mean-based no-regret algorithms are Pareto-dominated, while no-swap-regret algorithms are Pareto-optimal. Our work contributes to this ongoing line of work by offering a different type of result. We show that for first-price auctions, every OLO algorithm can be made strategically robust. Since OLO algorithms are simple and more practical than no-swap-regret algorithms, exploring to what extent our results hold in other economic settings is an interesting future direction. On a related note, a very recent work~\citep{liu2026onlineconformalpredictionuniversal} reports a result, similar to ours, but for online conformal prediction where the goal is to guarantee \emph{coverage}. They show that sublinear linearized regret suffices for coverage, while sublinear no-regret is known to be insufficient for coverage~\citep{angelopoulos2025gradient, ramalingam2025the}

\paragraph{Concurrent Work} Independently and concurrently, \citet{zhao2026no} also shows that any OLO algorithm is strategically-robust in the \emph{known} value distribution setting. Their results extend to more general auction formats with multiple bidders. Our techniques in \Cref{sec:unknown value} are general and could be useful for extension to the unknown-value distribution setting with multiple bidders.

\section{Preliminaries}
For a positive integer $T$, we denote $[T] = \{1 ,2 ,\ldots, T\}$. We use bold symbols $\mx$ to denote a vector, with $x_i$ representing the value of a specific coordinate $i$. We denote $\textbf{1}(\cdot)$ the indicator function of an event. We now introduce the setting of bidding in repeated first-price auctions, following the notations in~\citep{kumar2024strategically}.
\paragraph{First-price auction} We consider a standard single-item sealed-bid first-price auction setting. The buyer has an \emph{absolutely continuous} value distribution with CDF $F: [0,1] \rightarrow [0,1]$ and density function $f$.\footnote{In other words, the value distribution $F$ has a probability density function $f$ and no point mass. This assumption is aligned with \citep{kumar2024strategically}. } 
The bid space is finite and equally spaced: there are $K+1$ possible bids $0 = b_0 < b_1 < \ldots < b_K \le 1$, where $b_i = i \cdot \varepsilon$ for some $0 < \varepsilon\le \frac{1}{K}$. The buyer chooses a \emph{bidding strategy} $s: [0,1] \rightarrow \{b_0, b_1, \ldots, b_K\}$. We use $h \in \{b_0, b_1, \ldots, b_K\}$ to denote the maximum of the highest competing bid and the reserve price. The buyer wins the first-price auction if and only if her bid is at least $h$.\footnote{We remark that the minimum bid needed to win, $h$, completely captures the effect of competing bids and the reserve price and handles natural tie-breaking rules without loss of generality. See \citep[Section 2.1]{kumar2024strategically} for a detailed discussion.} To simplify notation, we will treat the reserve price as an additional bid submitted by the seller and refer to $h$ as the highest competing bid. When the buyer chooses strategy $s: [0,1] \rightarrow \{b_0, b_1, \ldots, b_K\}$ and the highest competing bid is $h$, her expected utility is
\begin{align*}
    u(s\mid F,h) := \-E_{v\sim F}\InBrackets{(v - s(v)) \cdot \mathbf{1}(s(v) \ge h)}.
\end{align*}

\paragraph{Bidding in repeated first-price auctions}
We assume that at every time $t \in [T]$, the buyer participates in a first-price auction. In each auction $t$, the following sequence of events takes place:
\begin{itemize}
    \item The buyer observes her value $V_t \sim F$ and places a bid $s_t(V_t) \in \{b_0, b_1, \ldots, b_K\}$ for some strategy $s_t$. 
    \item Simultaneously, nature picks the highest competing bids $h_t$, which can be chosen arbitrarily based on the past history $\{s_\tau(\cdot), V_\tau\}_{\tau=1}^{t-1}$, but cannot depend on the private value $V_t$ of the buyer.
    \item If $s_t(V_t) \ge h_t$, the buyer wins the item and pays $s_t(V_t)$. Otherwise, the buyer does not win the item and makes no payment.
    \item The buyer observes the highest competing bid $h_t$ and decides the biding strategy in the next auction $s_{t+1}$.
\end{itemize}

An online bidding algorithm $A$ for the buyer chooses a bidding strategy $s_t = A_t : [0,1] \rightarrow \{b_0, b_1, \ldots, b_K\}$ at each time $t$, based on all the past information observed until time $t-1$ and the value $V_t$. 

\paragraph{Regret} We measure the performance of an algorithm $A$ by its \emph{(pseudo) regret}, 
\begin{align}\label{eq:regret}
    \reg(A\mid F):= \max_{s(\cdot)}\sum_{t=1}^T \-E[u(s\mid F,h_t)] - \sum_{t=1}^T \-E[u(A_t \mid F, h_t)],
\end{align}
where the expectation is over the randomness in $h_t$. Unlike standard online learning, where choosing a randomized strategy is necessary to achieve sublinear regret, choosing a deterministic strategy $s_t$ suffices for regret and strategic robustness. This is because the seller can not choose $h_t$ based on the buyer's private value $V_t$, which is intrinsically random.

\paragraph{Strategic robustness} Informally speaking, an algorithm is \emph{strategically robust} if the seller cannot exploit the algorithm's structure to extract more average revenue from the buyer than is possible under the optimal single-shot mechanism. Given a bidding strategy $s(\cdot)$, the revenue that the seller extracts under the reserve-price/highest-competing-bid $h$ is 
\begin{align*}
    \Rev(s,h) := \-E_{v \sim F} \InBrackets{ s(v) \cdot \textbf{1}(s(v) \ge h)}.
\end{align*}
Let $\Myer(F)$ denote the revenue of the optimal mechanism~\citep{myerson1981optimal} by posting the optimal reserve price:
$
    \Myer(F):= \max_{r \in [0,1]} r \cdot (1 - F(r)).
$
We say an algorithm $A$ has $g(T)$-strategic robustness if 
\begin{align*}
    \sum_{t=1}^T \-E[\Rev(A_t,h_t)] \le \Myer(F) \cdot T + g(T).
\end{align*}
If $g(T) = o(T)$, we say the algorithm $A$ is strategically robust, as, asymptotically, the seller's average revenue is limited to that of the optimal single-shot mechanism $\Myer(F)$.

\subsection{Concave formulation}\label{sec:concave formulation} 
Note that the bidding strategy is infinite-dimensional and the mapping $s(\cdot) \rightarrow u(s\mid F,h)$ is non-concave, which prevents one from obtaining no-regret bidding algorithms directly from no-regret algorithms for online convex optimization. The work by~\citet{kumar2024strategically} provides a reformulation showing that utility maximization in first-price auctions over pure strategies can be formulated as a finite-dimensional \emph{concave} maximization problem over the \emph{quantile strategy space}, which enables the application of algorithms from the online convex optimization literature.
Below, we present their concave reformulation.

We consider a slightly generalized setting where we assume the highest competing bid $h \in \{b_0, b_1, \ldots, b_K\}$ is distributed according to $\md = (d_0, d_1, \ldots, d_K) \in \Delta^{K+1}$, independent of the value $v \sim F$. This includes the case of deterministic competing bid $h = h_t$. We define $u(s\mid F, \md) = \-E_{h \sim \md}[u(s \mid F, h)] = \sum_{i=0}^K d_i\cdot u(s \mid F, b_i)$. A key simplification is that we only need to compete with non-decreasing bidding strategies, as shown in the following lemma.
\begin{lemma}[Lemma 1 in \citep{kumar2024strategically}]
    There exists an optimal bidding strategy $s^* \in \argmax_{s(\cdot)} u(s\mid F, \md)$ such that $s^*(\cdot)$ is non-decreasing, left-continuous, and satisfies $s^*(v) \le v$ for all $v \in [0,1]$.
\end{lemma}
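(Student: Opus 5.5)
Starting from any optimal strategy $s$ (one exists since only finitely many bids matter: the utility depends on $s$ only through which set of values maps to each bid), I will modify it in three steps without decreasing $u(s\mid F,\md)$.

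\emph{Step 1 (no overbidding).} Let $G(b) := \sum_{i: b_i \le b} d_i$ be the probability of winning with bid $b$. The expected utility can be written pointwise as
\[
u(s\mid F,\md) = \-E_{v\sim F}\InBrackets{(v-s(v))\,G(s(v))}.
\]
For each $v$, the integrand $\phi_v(b) := (v-b)G(b)$ is negative when $b>v$ and $G(b)>0$, while $\phi_v(0) = v\,d_0 \ge 0$. So replacing $s(v)$ by $0$ whenever $s(v)>v$ does not decrease the integrand, and we may assume $s(v)\le v$. Indeed, it is simplest to take $s(v)\in\argmax_{b\in\{b_0,\dots,b_K\}}\phi_v(b)$ pointwise. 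Such a pointwise maximizer is optimal, and its maximizer set always contains a bid at most $v$.

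\emph{Step 2 (monotonicity).} This is the main point, and I will handle it by a single-crossing argument. For $v<v'$ and bids $b<b'$,
\[
\phi_{v'}(b')-\phi_{v'}(b) - \InParentheses{\phi_v(b')-\phi_v(b)} = (v'-v)\InParentheses{G(b')-G(b)} \ge 0,
\]
so $\phi$ has increasing differences. By Topkis-type reasoning, the largest maximizer $\bar s(v) := \max \argmax_b \phi_v(b)$ is non-decreasing in $v$.

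I will verify this directly. Suppose $v<v'$ and $b' = \bar s(v) > b = \bar s(v')$. Optimality of $b'$ at $v$ gives $\phi_v(b')\ge\phi_v(b)$. Adding increasing differences yields $\phi_{v'}(b')\ge \phi_{v'}(b)$, so $b'$ is also a maximizer at $v'$. This contradicts $b$ being the largest maximizer at $v'$. Moreover $\bar s(v)\le v$: for $b>v$ we have $\phi_v(b)\le 0\le \phi_v(0)$, with equality only when $G(b)=0$. In that case $\phi_v(b')=0$ for every $b'\le b$ as well, so I must make sure the largest maximizer is not above $v$. To do so I will instead take the largest maximizer among bids $\le v$, and recheck monotonicity using the same exchange argument restricted to feasible bids. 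This works because the feasible set $\{b \le v\}$ grows with $v$, so it is a lattice-increasing constraint and Topkis still applies.

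\emph{Step 3 (left-continuity).} A non-decreasing step function into a finite set has finitely many jump points. I will redefine $s$ at each jump point to equal its left limit. This changes $s$ only on a finite set, which has $F$-measure zero because $F$ is absolutely continuous. Hence the utility is unchanged, the function stays non-decreasing, and it becomes left-continuous. The constraint $s(v)\le v$ is preserved: at a jump point $v_0$ the left limit is $\lim_{v\uparrow v_0}s(v)\le \lim_{v\uparrow v_0} v = v_0$. The main obstacle I anticipate is the tie-handling in Step 2 together with the $s(v)\le v$ constraint, which the restricted-feasible-set version of the exchange argument resolves.
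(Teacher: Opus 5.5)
The paper does not prove this lemma; it imports it from \citet{kumar2024strategically}, so there is no in-paper argument to compare against. Your proof is correct as a standalone argument.

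The rewriting $u(s\mid F,\md)=\mathbb{E}_{v\sim F}[(v-s(v))\,G(s(v))]$ reduces the problem to maximizing $\phi_v(b)=(v-b)G(b)$ pointwise, and your increasing-differences identity is right. Restricting to bids $b\le v$ loses nothing, because $\phi_v(b)\le 0\le \phi_v(0)$ for $b>v$, so the restricted maximizers are global maximizers. In your exchange argument, $b<b'\le v<v'$ makes both bids feasible where needed. Finally, redefining at the (at most $K$) jump points to the left limits changes $s$ only on an $F$-null set and preserves both monotonicity and $s(v)\le v$.

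One small point: your opening justification of existence (``the utility depends on $s$ only through which set of values maps to each bid'') does not by itself give a maximizer. What does is the pointwise argument: $u(s'\mid F,\md)\le \mathbb{E}_v[\max_b \phi_v(b)]$ for every measurable $s'$. Your selection attains this bound, and it is measurable because it is monotone.

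You can shorten the argument by selecting the \emph{smallest} maximizer $\underline{s}(v):=\min\argmax_b \phi_v(b)$ instead.
\begin{itemize}
\item \emph{Monotone.} If $v<v'$ and $\underline{s}(v)=b>b'=\underline{s}(v')$, then $\phi_v(b')-\phi_v(b)\ge \phi_{v'}(b')-\phi_{v'}(b)\ge 0$, contradicting minimality at $v$.
\item \emph{No overbidding.} $\underline{s}(v)\le v$ holds automatically, since the argmax always contains a bid $\le v$.
\item \emph{Left-continuous.} Let $b^-$ be the value of $\underline{s}$ just to the left of $v$. Then $\phi_u(b^-)\ge \phi_u(b)$ for all $b$ as $u\uparrow v$, so by continuity of $\phi_u(b)$ in $u$, $b^-$ is a maximizer at $v$. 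Hence $b^-\le \underline{s}(v)\le b^-$.
\end{itemize}
This removes both the feasibility restriction and the null-set surgery. Your largest-maximizer selection is right-continuous at switch points, which is exactly why you needed Step 3. The variant also shows that the lemma holds for arbitrary $F$, not only absolutely continuous ones.
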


As a result, we only need to consider bidding strategies that are non-decreasing and left-continuous. We drop the non-overbidding condition $s(v) \le v$ for all $v \in [0,1]$ used in the original concave formulation of~\citep{kumar2024strategically}. Thus, we consider a larger strategy set with fewer constraints. Since the larger strategy set still contains an optimal bidding strategy, it does not affect the regret. Such a modification is unnecessary in the known value distribution setting, but is essential in the unknown-value distribution setting. This is because adding the non-overbidding condition makes the strategy set depend on $F$, which is no longer available in the unknown value distribution setting.

For a monotone and left-continous bidding strategy $s$, we set $v_0= 0$, $v_{K+1}=1$, and $v_i:=\max\{v \in [0,1] \mid s(v) \le b_{i-1}\}$ for all $1\le i \le K$ (we set $v_i = 0$ if the set is empty). Thus we get $0 = v_0 \le v_1 \le v_2 \le \ldots v_K \le v_{K+1} = 1$ and $s(v) = b_i$ for $v \in (v_i, v_{i+1}]$. We now define the bidding quantiles. Let $p_i$ denote the probability that the buyer submits a bid greater than or equal to $b_t$, i.e., $p_j = \-P(s(v) \ge b_j) = 1 - F(v_j)$ for all $j \in [K]$, $p_0 =1$, and $p_{K+1} = 0$. It is shown in~\citep{kumar2024strategically} that the utility can be reformulated as follows 
\begin{align}\label{eq:u concave}
    u(s\mid F,\md) = u(\mp\mid F, \md) := \sum_{i=0}^K d_i \cdot \InParentheses{ \int_{1-p_i}^1 F^{-}(u)\cdot \mathrm{d}u - \sum_{j=i}^Kb_j\cdot (p_j-p_{j+1})}, 
\end{align}
where $F^{-}$ is the generalized inverse of $F$, defined as $F^-(y) = \inf\{v \in [0,1]\mid F(v) \ge y\}$. 
%We note that we implicitly requires $p_0 =1$ and $p_{K+1} = 0$ in the definition of $u(\mp \mid F, \md)$.\hl{this seems extra; we already define them this way earlier} 
The space of bidding quantiles is the convex polytope $
\+P = \{ \mp \in [0,1]^K\mid p_j \ge p_{j+1}\}$. The following theorem summarizes the results of the reformulation, showing that the transformation between bidding strategies and the space of bidding quantiles is lossless and gives a concave utility function in the latter case.
\begin{theorem}[Adapted from Theorem 1 in \citep{kumar2024strategically}]
\label{thm:concave formulation}
    The following statements hold for all value distribution $F$ and competing bid distribution $\md \in \Delta^{K+1}$:
    \begin{itemize}
        \item[1.] $\mp \in \+P \rightarrow u(\mp\mid F,\md)$ is concave.
        \item[2.] Let $s: [0,1] \rightarrow \{b_0, b_1, \ldots, b_K\}$ be a non-decreasing left-continuous bidding strategy, and set $p_j = \-P(s(v) \ge b_j)$ for all $j \in [K]$. Then $\mp \in \+P$ and $u(s \mid F, \md) = u(\mp\mid F, \md)$.
        \item[3.] For absolutely continuous $F$, let $\mp \in \+P$ and define bidding strategy $s : [0,1] \rightarrow \{b_0, b_1, \ldots, b_K\}$ as $s(v) = b_i$ for $v \in (F^-(1-p_i), F^-(1-p_{i+1})]$ and $s(0) = 0$. Then $u(s \mid F, \md) = u(\mp\mid F, \md)$.
    \end{itemize}
\end{theorem}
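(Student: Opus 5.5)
By linearity in $\md$, it suffices to prove the three statements for a single deterministic competing bid $h=b_i$; every claim then extends to general $\md$ by averaging with weights $d_i\ge 0$.

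\textbf{Statement 2.} Fix a non-decreasing, left-continuous $s$ and the thresholds $v_j$ defined above, so that $s(v)=b_j$ on $(v_j,v_{j+1}]$. Monotonicity of the $v_j$ gives $p_j=1-F(v_j)\ge p_{j+1}$, hence $\mp\in\+P$. The buyer wins against $h=b_i$ exactly when $v>v_i$ (up to the null set $\{0\}$), so
\[
u(s\mid F,b_i)=\-E\InBrackets{v\,\mathbf 1(v>v_i)}-\sum_{j\ge i} b_j\,\-P\InParentheses{v\in(v_j,v_{j+1}]}.
\]
By definition $\-P(v\in(v_j,v_{j+1}])=F(v_{j+1})-F(v_j)=p_j-p_{j+1}$. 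For the first term I would use the quantile representation $v\overset{d}{=}F^-(U)$ with $U$ uniform on $[0,1]$, together with the identity $\{F^-(U)>v_i\}=\{U>F(v_i)\}$, which holds for all CDFs. This gives
\[
\-E\InBrackets{v\,\mathbf 1(v>v_i)}=\int_{F(v_i)}^1 F^-(u)\,\mathrm du=\int_{1-p_i}^1F^-(u)\,\mathrm du .
\]
Substituting both pieces reproduces \eqref{eq:u concave}.

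\textbf{Statement 3.} This is the converse direction. Given $\mp\in\+P$, define $s$ as in the statement with thresholds $w_i=F^-(1-p_i)$. I would show that $\-P(s(v)\ge b_j)=1-F(w_j)=p_j$. This needs $F(F^-(y))=y$, which holds for continuous $F$; that is exactly where absolute continuity of $F$ is used. With this identity, the same computation as in statement 2 goes through with $w_j$ in place of $v_j$. The convention $s(0)=0$ and any ties among the thresholds only affect null sets, so they are harmless.

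\textbf{Statement 1 (concavity).} I would again work term by term. The payment term $\sum_j b_j(p_j-p_{j+1})$ is linear in $\mp$. For the value term, set $G(p)=\int_{1-p}^1F^-(u)\,\mathrm du$. Then $G'(p)=F^-(1-p)$ almost everywhere, and this is non-increasing in $p$ because $F^-$ is non-decreasing. Hence $G$ is concave in the single coordinate $p_i$, and so it is concave on $\+P$. A sum of a concave function and a linear function is concave.

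\textbf{Main obstacle.} The delicate part is the generalized-inverse bookkeeping. I need the identity $\{F^-(U)>x\}=\{U>F(x)\}$ for arbitrary $F$ in statement 2, where $F$ may have flat regions. I also need $F(F^-(y))=y$ in statement 3, and I must check that the equality $\-P(s(v)\ge b_j)=p_j$ holds even when several $p_j$ coincide. I would handle both through the standard Galois-type relation $F^-(y)\le x\iff y\le F(x)$.
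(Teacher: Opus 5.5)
Your proposal is correct and follows essentially the route the paper relies on. The paper defers to Kumar et al.\ for this theorem, but it uses exactly your threshold identity $u(s\mid F,\md)=\sum_i d_i\bigl(\int_{F(v_i)}^1F^-(u)\,\mathrm du-\sum_{j\ge i}b_j(F(v_{j+1})-F(v_j))\bigr)$ in its appendix. It uses the fact $F(F^-(y))=y$ for continuous $F$ in its proof of the reparameterized formulation, and it relies on the monotonicity of $F^-$ for concavity, as reflected in the gradient entry $F^-(1-p_i)-b_i$.

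One cosmetic point concerns item 2, which the paper remarks holds for general $F$. There the set $\{0\}$ need not be null: take an atom at $0$ with $s(0)>0$. So work directly with the upper set $\{v: s(v)\ge b_i\}$, which is either $(v_i,1]$ or all of $[0,1]$, and with $p_i=\-P(s(v)\ge b_i)$ rather than $1-F(v_i)$. The formula then holds in both cases.
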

We remark that the first two items in \Cref{thm:concave formulation} hold for a general value distribution where $F$ may not be absolutely continuous. The third item requires $F$ to be absolutely continuous. In \Cref{sec:new concave formulation}, we present a modified concave formulation with a simpler strategy set.

\subsection{Online Learning} Online learning often concerns minimizing convex loss functions. Here, we consider the equivalent concave utility-maximization problems, consistent with bidding in auctions. Given a convex strategy set $\+X \subseteq \-R^d$, the online learning setting is as follows: at every time $t \in [T]$, the learner first chooses a strategy $x_t \in \+X$ and then the adversary chooses a concave utility function $u_t: \+X \rightarrow \-R$; the learner gets utility $u_t(x_t)$ and receives the gradient information $\nabla u_t(x_t)$. The goal of an online learning algorithm is to minimize the \emph{(external) regret}, defined as
\begin{align*}
    \reg:= \max_{\mx \in \+X} \sum_{t=1}^T u_t(\mx) - \sum_{t=1}^T u_t(\mx_t) \le \max_{\mx \in \+X} \sum_{t=1}^T \InAngles{\nabla u_t(\mx_t), \mx - \mx_t} =: \Lreg.
\end{align*}
When the utility functions are linear, the problem is called \emph{online linear optimization (OLO)}. Many OLO algorithms achieve $O(\sqrt{T})$ regret. Examples include:
\begin{itemize}[leftmargin=*]
    \item \textbf{Online Mirror Descent (OMD)}: Given any gradient feedback $\mg_t$ and a closed convex set $\+X$, the OMD algorithm updates $$\mx_{t+1} = \argmax_{\mx \in \+X}\{\eta_t\InAngles{\mg_t, \mx} - D_\phi(\mx, \mx_t)\},$$ where $\eta_t > 0$ is the step size, $\phi: \+X \rightarrow \-R$ is a strongly convex regularizer, and $D_\phi$ is the Bregman divergence induced by $\phi$, i.e., $D_\phi(\mx,\mx'):= \phi(\mx)-\phi(\mx') - \InAngles{\nabla \phi(\mx'), \mx - \mx'}$. The online gradient ascent (OGA) algorithm~\citep{zinkevich2003online} $\mx_{t+1} = \Pi_{\+X}[\mx_t + \eta_t \mg_t]$ is OMD with the squared Euclidean norm regularization $\phi(\mp) = \frac{1}{2}\InNorms{\mp}^2$. The Multiplicative Weights Update (MWU) algorithm~\citep{arora2012multiplicative} is OMD with the negative entropy regularizer $\phi(\mp) =\sum_{i=1}^d p_i\log p_i$. 
    \item \textbf{Follow-the-Regularized-Leader (FTRL)}: Given any gradient feedback $\mg_t$ and a closed convex set $\+X$, the FTRL algorithm updates $$\mx_{t+1} = \argmax_{\mx \in \+X}\left\{\eta_t\InAngles{\sum_{\tau=1}^{t}\mg_\tau, \mx} - \phi(\mx)\right\},$$ with $\eta_t >0$ as the step size and $\phi$ as the strongly convex regularizer over $\+X$. The MWU algorithm is also FTRL with the negative entropy regularizer.
\end{itemize}
%If we feed  $ \mg_t = \nabla u_t(\mx_t)$ to an OLO algorithm, the resulting algorithm minimizes the $\Lreg$.\hl{the word "minimize" is not accurate though} 

\section{Online Linear Optimization Suffices for Strategic Robustness}
With the concave formulation in \Cref{thm:concave formulation}, the online gradient ascent (OGA) algorithm~\citep{zinkevich2003online} operating over $\+P$ gives a bidding algorithm with $O(\sqrt{TK})$ regret.
However, \citet{kumar2024strategically} require a different proof to show that OGA also has $O(\sqrt{TK})$ strategic robustness. OGA remains the only known algorithm with $O(\sqrt{T})$ regret and strategic robustness. 

In this section, we show that OGA is not unique and that any online linear optimization algorithm induces a strategically robust bidding algorithm. Our main result is that if a bidding algorithm minimizes a slightly stronger regret notion called \emph{linearized regret} (defined below in \eqref{eq:linearized regret}), then it is also strategically robust. Specifically, if an algorithm $A$ has linearized regret $g(T)$, then $A$ is $g(T)$-strategically robust. Since many online learning algorithms achieve optimal $O(\sqrt{T})$ regret, including the families of online mirror descent and follow-the-regularized-leader algorithms, they are also $O(\sqrt{T})$-strategically robust. To complement our positive result, we show in \Cref{sec:no-regret} that even if an algorithm has \emph{negative} regret (defined in \eqref{eq:regret}), it may not be strategically robust. Together, our results show that sublinear linearized regret suffices for strategic robustness, whereas minimizing the weaker notion of regret is insufficient.

\subsection{Sublinear Linearized Regret Implies Strategic Robustness}
We recall the notion of linearized regret for bidding in first-price auctions and then show our main result.

\paragraph{Linearized Regret} In the following, for an algorithm $A$ that produces strategy $A_t$, we denote $\mp_t$ the bidding quantiles of $A_t$ as defined in item 2 of \Cref{thm:concave formulation}. Since the utility function $\mp \rightarrow u(\mp\mid F, \md)$ is concave, we can upper bound the regret with the \emph{linearized regret}:
\begin{align}
    \reg(A\mid F)
    &= \max_{s(\cdot)}\sum_{t=1}^T \-E[u(s\mid F,h_t)] - \sum_{t=1}^T \-E[u(A_t \mid F, h_t)] \nonumber \\
    &= \max_{\mp \in \+P}\sum_{t=1}^T \-E[u(\mp \mid F,h_t) - u(\mp_t \mid F, h_t)] \tag{\Cref{thm:concave formulation}} \nonumber\\
    &\le \max_{\mp \in \+P}\sum_{t=1}^T \-E[\InAngles{\nabla u(\mp_t \mid F, h_t), \mp - \mp_t}]=: \Lreg(A\mid F), \label{eq:linearized regret}
\end{align}
where the expectation is over any randomness in $h_t$. 
 
We show that if an algorithm minimizes linearized regret against the all-$0$ strategy, then it is also strategically robust. Formally, we have
\begin{theorem}[Linearized Regret bounds Strategic Robustness]\label{thm:linearized regret}
    For any online learning algorithm $A$ producing non-decreasing and left-continuous bidding strategies $\{A_t\}_{t=1}^T$ with bidding quantiles $\{\mp_{t} \in \+P\}_{t=1}^T$ (as defined in item 2 of \Cref{thm:concave formulation}), it holds that
    \begin{align*}
        \sum_{t=1}^T \-E[\Rev(A_t, h_t)] &\le \Myer(F) \cdot T + \sum_{t=1}^T \-E[\InAngles{ \nabla u(\mp_t\mid F, h_t),    \boldsymbol{0} - \mp_t}]\\
        &\le \Myer(F) \cdot T + \Lreg(A\mid F).
    \end{align*}
\end{theorem}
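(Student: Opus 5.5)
The plan is to prove a per-round inequality and then sum. Fix a round $t$ and condition on $h_t$. It suffices to show that for every $\mp\in\+P$ and every deterministic competing bid $h=b_i$,
\begin{align*}
\Rev(\mp, b_i) \le \Myer(F) + \InAngles{\nabla u(\mp\mid F,b_i), \boldsymbol{0}-\mp}.
\end{align*}
Here $\Rev(\mp,b_i)$ is the revenue of the strategy with quantiles $\mp$. Taking expectation over $h_t$ and summing over $t$ then gives the first inequality of \Cref{thm:linearized regret}, since both sides are linear in the distribution $\md$. The second inequality is immediate because $\boldsymbol{0}\in\+P$: the all-zero quantile vector is the all-$0$ bidding strategy. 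Hence $\sum_t \-E[\InAngles{\nabla u(\mp_t\mid F,h_t),\boldsymbol 0-\mp_t}]$ is bounded by the maximum over $\mp\in\+P$ defining $\Lreg(A\mid F)$.

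For the per-round inequality I will use the explicit form \eqref{eq:u concave} with $\md$ a point mass at $b_i$:
\begin{align*}
u(\mp\mid F,b_i)=\int_{1-p_i}^1 F^-(y)\,\mathrm dy-\sum_{j=i}^K b_j(p_j-p_{j+1}).
\end{align*}
The subtracted sum is exactly the expected payment when the bid is at least $b_i$, so it equals $\Rev(\mp,b_i)$. It is a linear function of $\mp$ with no constant term. This uses $p_{K+1}=0$; when $i=0$, the coefficient $b_0=0$ kills the $p_0=1$ term. By Euler's identity for linear maps, its gradient $\mathbf L$ satisfies $\InAngles{-\mathbf L,\boldsymbol 0-\mp}=\Rev(\mp,b_i)$.

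The integral term depends only on $p_i$ (for $i\ge1$), and its derivative is $F^-(1-p_i)$. So
\begin{align*}
\InAngles{\nabla u,\boldsymbol 0-\mp}=\Rev(\mp,b_i)-p_i\,F^-(1-p_i).
\end{align*}
For $i=0$ the integral is the constant $\int_0^1F^-$ and contributes no gradient term. The claimed inequality therefore reduces to $p_iF^-(1-p_i)\le\Myer(F)$.

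To prove this reduced inequality, I take $r=F^-(1-p_i)$ as a posted price. Since $F$ is absolutely continuous, $F(F^-(y))=y$, so $1-F(r)=p_i$ and $r(1-F(r))=p_iF^-(1-p_i)\le\Myer(F)$. The case $p_i=0$ is trivial.

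The main obstacle I anticipate is technical: $F^-$ may be discontinuous when $f$ vanishes on intervals, so $u$ need not be differentiable in $p_i$. I would handle this by interpreting $\nabla u$ as the supergradient used by the algorithm. Any supergradient of the integral term lies between the one-sided derivatives $F^-(1-p_i)$ and $\lim_{y\downarrow 1-p_i}F^-(y)$. Each of these is a price $r$ with $1-F(r)=p_i$ by continuity of $F$, so the bound $p_i r\le\Myer(F)$ still holds and the argument goes through unchanged.
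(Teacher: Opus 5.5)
Your proposal is correct and follows the paper's proof: you establish the same pointwise identity $\langle \nabla u(\boldsymbol{p}\mid F,b_i),\boldsymbol{p}\rangle+\Rev(\boldsymbol{p},b_i)=p_iF^-(1-p_i)$, bound the right-hand side by $\Myer(F)$ using the posted price $r=F^-(1-p_i)$, and then take expectations and sum, using $\boldsymbol{0}\in\mathcal{P}$ for the second inequality. The differences are minor. You get the revenue part from Euler's identity for a linear function with no constant term, rather than the paper's explicit coordinate computation, so you never use the equal spacing $b_j-b_{j-1}=\varepsilon$. You also spell out that $1-F(r)=p_i$ follows from absolute continuity and treat the nondifferentiable (supergradient) case, both of which the paper leaves implicit.
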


%For the result that sublinear linearized regret implies strategic robustness,
Our proof of \Cref{thm:linearized regret} is inspired by~\citep{kumar2024strategically}. In order to prove OGA is strategically robust, they present a simple and illustrative proof sketch for the continuous-time OGA (i.e., the gradient flow), and a formal but much more complex proof for the discrete-time OGA algorithm. We make the observation that their proof for continuous-time OGA actually applies to all discrete-time algorithms with low linearized regret. Specifically, we use a simple analysis to show that for any buyer's bidding strategy, the seller's revenue is upper bounded by $\Myer(F)$ plus the strategy's linearized regret against the all-0 strategy. Thus, sublinear linearized regret implies strategic robustness. 

\begin{proof}[Proof of \Cref{thm:linearized regret}]
    Let $h = b_i$. The revenue obtained from the bidding strategy according to bidding probability $\mp$ is 
    \begin{align*}
        \Rev(\mp, h) = \sum_{j=i}^K b_j \cdot \-P(\mathrm{Bid} = b_j) = \sum_{j=i}^K b_j \cdot (p_j - p_{j+1}) = b_i p_i + \sum_{j=i+1}^K p_j (b_j - b_{j-1}) = b_i p_i + \varepsilon \cdot \sum_{j=i+1}^K p_j.
    \end{align*}
    Also, note that the gradient $\nabla u(\mp\mid F, h)$ is 
    \begin{align*}
        \partial_j u(\mp\mid F, h):= \frac{\partial u(\mp\mid F, h)}{\partial p_j} = \begin{cases}
            0 & \text{if } j < i\\
            F^-(1-p_i) - b_i & \text{if } j = i\\
            -\varepsilon & \text{if } j > i
        \end{cases}
    \end{align*}
    Thus we have 
    \begin{align*}
        \InAngles{ \nabla u(\mp\mid F, h), \mp} = \sum_{j=1}^K p_j \cdot  \partial_j u(\mp\mid F, h) = p_i \cdot (F^-(1-p_i) - b_i) - \varepsilon\cdot \sum_{j=i+1}^K p_j.
    \end{align*}
    Combining the above two equalities gives
    \begin{align*}
        \InAngles{ \nabla u(\mp\mid F, h), \mp} + \Rev(\mp, h) &= p_i \cdot (F^-(1-p_i) - b_i) - \varepsilon\cdot \sum_{j=i+1}^K p_j + b_i p_i + \varepsilon \cdot \sum_{j=i+1}^K p_j \\
        & = p_i \cdot F^-(1-p_i) \\
        &\le \Myer(F),
    \end{align*}
    where the last inequality holds since by definition $\Myer(F) \ge r \cdot (1 -F(r))$ for $r = F^-(1-p_i)$. Since the above holds for any $h = b_i$, we can take the expectation over the randomness in $h_t$ (note that $h_t$ is independent from $V_t \sim F$) and get for any $t \ge 1$,
    \begin{align*}
        \-E[\Rev(A_t, h_t)] = \-E[\Rev(\mp_t,h_t)] \le \Myer(F) - \-E[\InAngles{ \nabla u(\mp_t\mid F, h_t), \mp_t}].
    \end{align*}
    Summing over time $t$ and noting that $\boldsymbol{0}$ lies in $\+P$, we conclude the desired inequalities.
\end{proof}

\subsection{Sublinear Regret is Insufficient for Strategic Robustness}\label{sec:no-regret}
Next, we show that even if an algorithm has negative regret $-\Omega(T)$, it may not be strategically robust.

\begin{theorem}\label{thm:negative-regret}
    There exists an instance where an algorithm has negative regret but is not strategically robust. Specifically, there exists a value distribution $F$, a sequence of decreasing competing bids $\{h_t\}_t$, and an algorithm $A$ such that 
    \begin{align*}
        \reg(A\mid F) = -\Omega(T), \text{ but } \sum_{t=1}^T \-E[\Rev(A_t, h_t)] \ge \Myer(F) \cdot T + \Omega(T).
    \end{align*}
\end{theorem}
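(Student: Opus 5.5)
We will build an explicit two-phase instance in which the seller first posts a high competing bid and then drops it to $0$, and the algorithm adapts to each phase. The adaptivity lets the algorithm beat every fixed strategy in hindsight, and at the same time the seller collects more than $\Myer(F)$ per round in the first phase.

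\textbf{Instance and algorithm.} Take $F$ uniform on $[0,1]$, so that $\Myer(F)=\max_r r(1-r)=1/4$. Take the bid grid $K=2$, $\varepsilon=1/2$, so that $b_0=0$, $b_1=1/2$, $b_2=1$. Fix constants $\alpha\in(1/2,1)$ and $\theta_0\in(0,1/2)$; the concrete choice $\alpha=0.9$, $\theta_0=0.1$ will work. The competing bids are $h_t=1/2$ for $t\le \alpha T$ and $h_t=0$ afterwards; this sequence is deterministic and non-increasing. The algorithm $A$ behaves as follows.
\begin{itemize}
\item In the first phase it uses $A_t(v)=1/2\cdot\mathbf 1(v>\theta_0)$. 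This overbids for values in $(\theta_0,1/2)$.
\item In the second phase it bids $0$ always.
\end{itemize}
Both strategies are non-decreasing and left-continuous.

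\textbf{Per-round utility and revenue of $A$.}
\begin{itemize}
\item First phase: utility $\int_{\theta_0}^1(v-1/2)\,dv=\theta_0(1-\theta_0)/2$ and revenue $(1-\theta_0)/2$.
\item Second phase: utility $\mathbb E[v]=1/2$ and revenue $0$.
\end{itemize}
Hence the total revenue is $\alpha T(1-\theta_0)/2$. With $\alpha=0.9$ and $\theta_0=0.1$ this equals $0.405\,T\ge T/4+\Omega(T)$, which is the strategic-robustness violation.

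\textbf{Best fixed strategy in hindsight (main step).} The main obstacle is to show that the best fixed strategy in hindsight earns at most $(1-\alpha)T/2$. By the lemma recalled from \citep{kumar2024strategically}, applied with $\md$ equal to the empirical mixture $\alpha\,\delta_{1/2}+(1-\alpha)\,\delta_0$, we may restrict to non-decreasing, non-overbidding strategies.

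\emph{Reduction to thresholds.} Such a strategy bids $0$ below some threshold $\theta\ge 1/2$ and a positive bid above it. A bid of $1$ is never useful under non-overbidding except at the single point $v=1$, and a bid of $1/2$ wins in both phases. So it suffices to consider $s_\theta(v)=1/2\cdot\mathbf 1(v>\theta)$ with $\theta\in[1/2,1]$. Rather than relying on the lemma, one can also check this directly: replacing any bid by $0$ or $1/2$ according to whether it is below or at least $1/2$ weakly improves utility.

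\emph{Optimizing the threshold.} The total utility of $s_\theta$ is
\[
T\Bigl[\alpha\int_\theta^1(v-\tfrac12)\,dv+(1-\alpha)\bigl(\tfrac12-\tfrac{1-\theta}2\bigr)\Bigr].
\]
Its derivative in $\theta$ is $T\bigl[-\alpha(\theta-1/2)+(1-\alpha)/2\bigr]$. This is positive on $[1/2,1)$ whenever $\alpha<1$, because $\alpha(\theta-1/2)\le\alpha/2$ there and $\alpha/2<(1-\alpha)/2$ would be needed for the sign to flip. That condition fails for $\alpha=0.9$, so the sign needs an actual check. At $\alpha=0.9$ the derivative is $T\,[0.05-0.9(\theta-1/2)]$, which is negative for $\theta>0.556$. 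So instead of arguing monotonicity, I will evaluate the concave quadratic exactly and maximize it.

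\emph{Choosing constants.} The maximum of the quadratic is $(1-\alpha)/2+(1-\alpha)^2/(8\alpha)$ per round. I then need
\[
\alpha\,\theta_0(1-\theta_0)/2>(1-\alpha)^2/(8\alpha).
\]
At $\alpha=0.9$, $\theta_0=0.1$ the left side is $0.0405$ and the right side is about $0.0014$, so the inequality holds. The algorithm's average utility is $(1-\alpha)/2+\alpha\theta_0(1-\theta_0)/2$, so $\reg(A\mid F)\le-cT$ for a constant $c>0$.

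Combining the two bounds gives $\reg(A\mid F)=-\Omega(T)$ and revenue at least $\Myer(F)\cdot T+\Omega(T)$, as claimed in \Cref{thm:negative-regret}.
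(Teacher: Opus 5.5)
\textbf{The arithmetic in the negative-regret step is wrong, and with $\alpha=0.9$, $\theta_0=0.1$ the regret is positive and linear, so the proposal as written does not prove the theorem.} The rest of the two-phase construction is sound: the uniform $F$, the bids $\{0,1/2,1\}$, the reserves $1/2$ then $0$, the phase utilities, and the revenue bound $0.405\,T$ are all correct.

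The error is in the benchmark. The per-round utility of the threshold strategy is $U(\theta)=\alpha\theta(1-\theta)/2+(1-\alpha)\theta/2$. It is maximized at $\theta^*=1/(2\alpha)$ with value $U(\theta^*)=1/(8\alpha)$. The per-value comparison gives the same answer: bidding $1/2$ beats bidding $0$ iff $v-1/2\ge(1-\alpha)v$, i.e.\ $v\ge 1/(2\alpha)$.

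Your expression $(1-\alpha)/2+(1-\alpha)^2/(8\alpha)$ equals $U(1)+\bigl(U(\theta^*)-U(1/2)\bigr)$. You added the increment measured from $\theta=1/2$ to the value at $\theta=1$, but $U(1/2)=\alpha/8+(1-\alpha)/4$, not $(1-\alpha)/2$. At $\alpha=0.9$ the benchmark is $1/7.2\approx 0.139$ per round. Your algorithm earns $0.9\cdot 0.045+0.05=0.0905$ per round, so $\reg(A\mid F)\approx +0.048\,T$.

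The reason is that overbidding down to $\theta_0=0.1$ costs $\alpha(1-2\theta_0)^2/8=0.072$ per round relative to the per-phase optimum. The adaptivity gain over the best fixed strategy is only $(3\alpha-1)(1-\alpha)/(8\alpha)\approx 0.024$ per round.

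The correct condition for negative regret is $\alpha(1-2\theta_0)^2/8<(3\alpha-1)(1-\alpha)/(8\alpha)$, equivalently $4\alpha^2\theta_0(1-\theta_0)>(2\alpha-1)^2$. It must hold together with the revenue condition $\theta_0<1-1/(2\alpha)$. Writing $c=1-1/(2\alpha)<1/2$, you need $\theta_0(1-\theta_0)>c^2$ and $\theta_0<c$. Since $c(1-c)>c^2$, this window just below $c$ is nonempty, so the construction can be repaired.

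For example, $\alpha=3/4$ and $\theta_0=1/4$ work. The algorithm's utility is $25T/128$ against a benchmark of $T/6$, so $\reg(A\mid F)=-11T/384$. The revenue is $9T/32=\Myer(F)\cdot T+T/32$.

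With that fix, your instance is a valid alternative to the paper's. The paper uses a truncated equi-revenue distribution with reserves $1/4$ then $1/8$. It bids per-phase optimally in the first phase and puts the overpayment in the second phase, whereas you overbid in the first phase.
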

We present the instance and the proof of \Cref{thm:negative-regret} in \Cref{app:regret does not imply strategic robustness}. 

\subsection{Modified Concave Formulation: from Quantiles to Probabilities} \label{sec:new concave formulation}

In the original concave formulation $\mp \rightarrow u(\mp \mid F, \md)$, the strategy space $\+P: \{\mp \in [0,1]^K: p_j \ge p_{j+1}\}$ (recall that $p_0 :=1$ and $p_{K+1}:=0$) is a general polytope. To further improve the regret and strategic robustness guarantees, we consider a modified concave formulation. We reparameterize $\+P$ using the following one-to-one correspondence. Let $Q:= \Delta^{K+1}$ be the probability simplex, we define the following two functions:
\begin{align*}
    &M_{\+P\rightarrow \+Q}: \mp \in \+P \rightarrow \mq \in \+Q \text{ such that }  q_i = p_{i-1} - p_i, \forall i \in [K+1]; \\
    &M_{\+Q\rightarrow \+P}: \mq \in \+Q \rightarrow \mp \in \+P \text{ such that } p_i = \sum_{\ell=i+1}^{K+1} q_\ell, \forall i \in [K].
\end{align*}
We also note that $p_0 = \sum_{\ell=1}^{K+1} q_\ell =1$ and $p_{K+1} = \sum_{\ell =K+2}^{K+1} q_\ell = 0$. Both transformations $M_{\+Q\rightarrow \+P} $ and $M_{\+P\rightarrow \+Q}$ are linear and inverse function to each other, so this new formulation is just a reparameterization and does not affect the results in the previous sections. 

Each $\mq\in \+Q$ should be understood as \emph{bidding probabilities}. Given any non-increasing bidding strategy $s(\cdot)$, the concave formulation in \Cref{thm:concave formulation} defines the bidding quantiles $\{p_i:= \-P_{v \sim F}[s(v) \ge b_i]\}_{i=0}^{K+1}$. Using the one-to-one correspondence above, we have the corresponding bidding probabilities $\mq_s \in \Delta^{k+1}$:
\begin{align*}
    q_{s,i} := \-P_{v\sim F}(s(v) = b_{i-1}) = \-P(s(v) \ge b_{i-1}) - \-P(s(v) \ge b_i) = p_{i-1} - p_{i}, \forall i \in [K+1].
\end{align*}
Thus $q_{s,i}$ can be understood as the probability of bidding $b_{i-1}$ under the value distribution $F$ and strategy $s(\cdot)$. 
By equation~\eqref{eq:u concave}, item 2 in \Cref{thm:concave formulation}, and the correspondence between $\+P$ and $\+Q$, the utility of strategy $s(\cdot)$ can be written as 
\begin{align*}
    u(s\mid F, \md) = u(\mq_s \mid F, \md) := u(M_{\+Q \rightarrow \+P}(\mq_s) \mid F, \md) = \sum_{i=0}^K d_i \cdot \InParentheses{ \int_{\sum_{\ell=1}^{i} q_{s, \ell}}^1 F^{-}(u)\cdot \mathrm{d}u - \sum_{j=i}^K b_j\cdot q_{s, j+1}}.
\end{align*}
We summarize results on the new concave formulation in the following lemma.

\begin{lemma}[New Concave Formulation]
\label{lemma:new concave formulation} The following statements hold for all value distributions $F$ and competing bid distributions $\md \in \Delta^{K+1}$: let $Q = \Delta^{K+1}$,
    \begin{itemize}
        \item[1.] $\mq \in \+Q \rightarrow u(\mq\mid F,\md)$ is concave.
        \item[2.] The gradient $\nabla_{\mq} u(\mq\mid F,h)$ for $h = b_i$ where $i \in \{0\} \cup [K]$ is: ,
        \begin{align*}
            [\nabla_{\mq} u(\mq\mid F,h)]_j = \begin{cases}
                0  & \text{if } j \le i;\\
                F^-(\sum_{\ell=1}^{i} q_\ell) - b_{j-1}  & \text{if }  j \ge i+1;\\
            \end{cases}
        \end{align*}
        In particular, we have $\InNorms{\nabla_{\mq} u(\mq\mid F,h)}_\infty \le 1$.  
        \item[3.] Let $s: [0,1] \rightarrow \{b_0, b_1, \ldots, b_K\}$ be a non-decreasing left-continuous bidding strategy, and define $\mq_s \in \+Q$ such that $q_{s,j} := \-P_{v \sim F}(s(v) = b_{j-1})$ for all $j \in [K+1]$. Then $\mq_s \in \+Q$ and 
        \begin{align}
            &u(s\mid F, \md) = u(\mq_s \mid F, \md) := \sum_{i=0}^K d_i \cdot \InParentheses{ \int_{\sum_{\ell=1}^{i} q_{s,\ell}}^1 F^{-}(u)\cdot d u - \sum_{j=i}^Kb_j\cdot q_{s, j+1}} \label{eq:u_Q} \\
            &\Rev(s \mid F, \md) = \Rev(\mq_s \mid \md):= \sum_{i=0}^K d_i\cdot \InParentheses{\sum_{j=i}^K b_j\cdot  q_{s,j+1}} \label{eq:rev_Q}
        \end{align}
        \item[4.] When $F$ is absolutely continuous, let $\mq \in \+Q$ and define bidding strategy $s_{\mq} :[0,1] \rightarrow \{b_0, b_1, \ldots, b_K\}$ such that 
        \begin{align*}
            s_{\mq}(v) := 
            \begin{cases}
                b_0 & v \in [0, F^-(q_1)] \\
                b_i & i\in[K],  v \in (F^- (\sum_{\ell=1}^{i} q_\ell), F^-(\sum_{\ell=1}^{i+1} q_\ell) ]
            \end{cases}
        \end{align*}
        Then $u(s_{\mq} \mid F, \md) = u(\mq\mid F, \md)$.
    \end{itemize}  
\end{lemma}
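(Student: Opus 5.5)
The plan is to reduce every item to \Cref{thm:concave formulation} through the linear bijections $M_{\+Q\rightarrow\+P}$ and $M_{\+P\rightarrow\+Q}$.

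For item 1, write $u(\mq\mid F,\md) = u(M_{\+Q\rightarrow \+P}(\mq)\mid F,\md)$. Since $M_{\+Q\rightarrow\+P}$ is linear and maps $\+Q$ onto $\+P$, and $\mp\mapsto u(\mp\mid F,\md)$ is concave on $\+P$ by item 1 of \Cref{thm:concave formulation}, the composition is concave. One detail needs care: $M_{\+Q\rightarrow\+P}$ should map $\+Q$ into $\+P$. This holds because $p_i=\sum_{\ell>i}q_\ell$ is non-increasing in $i$ and lies in $[0,1]$.

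For item 2, I will differentiate \eqref{eq:u_Q} directly with $\md$ a point mass at $h=b_i$. Then
\[
u(\mq\mid F,b_i)=\int_{\sum_{\ell\le i}q_\ell}^1F^-(u)\,\mathrm{d}u-\sum_{j=i}^K b_j q_{j+1}.
\]
For $j\le i$, the variable $q_j$ appears only in the lower limit of the integral, so its partial derivative is $-F^-(\sum_{\ell\le i}q_\ell)$. This does not match the stated $0$, and this is the point I expect to be the main obstacle. The resolution is that on the simplex only differences of gradient coordinates matter: $\langle \mg,\mq-\mq'\rangle$ is unchanged when a constant is added to every coordinate of $\mg$. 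So I will add $F^-(\sum_{\ell\le i}q_\ell)$ to all coordinates. The coordinates with $j\le i$ become $0$, and for $j\ge i+1$ the raw derivative $-b_{j-1}$ becomes $F^-(\sum_{\ell\le i}q_\ell)-b_{j-1}$, which is exactly the stated formula. This choice should be stated explicitly as the (super)gradient representative used on $\+Q$.

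A second, smaller issue is that $F^-$ may be discontinuous. I will handle this either by a left/right derivative argument or by noting that it yields a valid supergradient, since $\int_x^1F^-$ is concave in $x$ because $F^-$ is non-decreasing. The bound $\|\cdot\|_\infty\le1$ then follows because $F^-\in[0,1]$ and $b_{j-1}\in[0,1]$.

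For item 3, set $\mp$ as in item 2 of \Cref{thm:concave formulation}. Then $q_{s,j}=\-P(s(v)\ge b_{j-1})-\-P(s(v)\ge b_j)=p_{j-1}-p_j$, so $\mq_s=M_{\+P\rightarrow\+Q}(\mp)\in\+Q$. Substituting $1-p_i=\sum_{\ell\le i}q_{s,\ell}$ and $p_j-p_{j+1}=q_{s,j+1}$ into \eqref{eq:u concave} gives \eqref{eq:u_Q}. The revenue formula \eqref{eq:rev_Q} follows from the computation in the proof of \Cref{thm:linearized regret}, $\Rev(\mp,b_i)=\sum_{j\ge i}b_j(p_j-p_{j+1})$, averaged over $\md$.

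For item 4, I will check that $s_{\mq}$ coincides with the strategy constructed in item 3 of \Cref{thm:concave formulation} for $\mp=M_{\+Q\rightarrow\+P}(\mq)$. The interval endpoints agree because $F^-(1-p_i)=F^-(\sum_{\ell\le i}q_\ell)$, so that item applies and gives the claim.
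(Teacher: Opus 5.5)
Your proposal is correct. Items 1 and 3 are argued exactly as in the paper, by transporting \Cref{thm:concave formulation} through the linear bijection $M_{\+Q\rightarrow\+P}$. The differences are in items 2 and 4.

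\textbf{Item 2.} The paper gives no separate argument. The stated vector is what the chain rule gives from the $\+P$-gradient computed in the proof of \Cref{thm:linearized regret}. In $u(M_{\+Q\rightarrow\+P}(\mq)\mid F,b_i)$ the lower limit is $1-p_i=1-\sum_{\ell>i}q_\ell$, with $p_0\equiv 1$ a constant. So $q_1,\ldots,q_i$ do not appear at all, and no normalization is needed.

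Your route differentiates \eqref{eq:u_Q} literally and then shifts by $F^-(\sum_{\ell\le i}q_\ell)$ times the all-ones vector. It lands on the same vector and makes explicit that the gradient on $\+Q$ is only determined modulo that direction. Fixing the representative, as you propose, matters. Identities used later hold for this representative and not for the raw partial derivatives. Examples are $\InAngles{\nabla u(\mq\mid F,h),\mq^0}=0$ in \Cref{corollary:linearized regret} and the sign of Term I in the proof of \Cref{thm:reduction unknown}. Your supergradient remark for discontinuous $F^-$ also covers a point the paper passes over.

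\textbf{Item 4.} The paper uses absolute continuity to get $F(F^-(x))=x$. Hence $\-P_{v\sim F}(s_{\mq}(v)=b_{j-1})=q_j$, and it then invokes item 3 of the lemma itself. You instead identify $s_{\mq}$ with the strategy of item 3 of \Cref{thm:concave formulation} for $\mp=M_{\+Q\rightarrow\+P}(\mq)$. The $b_0$ interval matches because $F^-(0)=0$ and that strategy sets $s(0)=0$.

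Your version is shorter but delegates the continuity argument to the cited theorem. The paper's version exposes the mechanism $F(F^-(x))=x$. That identity is exactly what fails for $\hF_t$, with its atom at $0$, and it is why \Cref{lemma:q q'} is needed in \Cref{sec:unknown value}.
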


\paragraph{Strategic Robustness} Sublinear linearized regret over the new concave formulation also implies strategic robustness, as a corollary of \Cref{thm:linearized regret}. We provide a proof in \Cref{app:corollary linearized regret}. 

\begin{corollary}\label{corollary:linearized regret}
    Any online learning algorithm $A$ produces non-decreasing and left-continuous bidding strategies $\{s_t\}_{t=1}^T$ with bidding probabilities $\{\mq_{s_t} \in \+Q\}_{t=1}^T$ (as defined in item 3 in \Cref{lemma:new concave formulation}) satisfies
    \begin{align*}
        \sum_{t=1}^T \-E[\Rev(A_t, h_t)] &\le \Myer(F) \cdot T + \sum_{t=1}^T \-E[\InAngles{ \nabla u(\mq_t\mid F, h_t), \mq^0 - \mq_t}] \\
        & =\Myer(F) \cdot T - \sum_{t=1}^T \-E[\InAngles{ \nabla u(\mq_t\mid F, h_t),\mq_t}],
    \end{align*}
    where $\mq^0 = M_{\+P \rightarrow \+Q}(\boldsymbol{0}) = (1, 0, \ldots, 0) \in \+Q$.
\end{corollary}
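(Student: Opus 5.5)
The plan is to deduce the corollary from \Cref{thm:linearized regret} through the linear reparameterization $M_{\+Q\rightarrow\+P}$. The key observation is that $u(\mq\mid F,h) = u(M_{\+Q\rightarrow\+P}(\mq)\mid F,h)$, with $M_{\+Q\rightarrow\+P}$ linear; call its matrix $M$. By the chain rule, $\nabla_{\mq} u(\mq\mid F,h) = M^\top \nabla_{\mp} u(\mp\mid F,h)$ where $\mp = M\mq$. Hence for any $\mq'\in\+Q$,
\begin{align*}
\InAngles{\nabla_{\mq} u(\mq\mid F,h), \mq' - \mq} = \InAngles{\nabla_{\mp} u(\mp\mid F,h), M\mq' - M\mq}.
\end{align*}
Applying this with $\mq' = \mq^0 = (1,0,\ldots,0)$ gives $M\mq^0 = \boldsymbol{0}$, since $p_i = \sum_{\ell > i} q_\ell = 0$ for all $i\in[K]$. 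So the linearized term against $\mq^0$ in the $\+Q$-parameterization equals the linearized term against $\boldsymbol{0}$ in the $\+P$-parameterization.

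Next I identify the iterates. For a non-decreasing left-continuous $s_t$, the bidding probabilities $\mq_t := \mq_{s_t}$ from item 3 of \Cref{lemma:new concave formulation} satisfy $M_{\+Q\rightarrow\+P}(\mq_t) = \mp_t$, where $\mp_t$ are exactly the bidding quantiles used in \Cref{thm:linearized regret}: by definition $q_{s,i} = p_{i-1}-p_i$, and the two maps are mutually inverse. Then \Cref{thm:linearized regret} gives
\begin{align*}
\sum_t \-E[\Rev(A_t,h_t)] \le \Myer(F)\, T + \sum_t \-E[\InAngles{\nabla_{\mp} u(\mp_t\mid F,h_t), \boldsymbol{0}-\mp_t}],
\end{align*}
and the identity above converts the right-hand side into the first claimed bound. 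Averaging over $h_t$ is harmless because everything is pointwise in $h_t$.

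For the final equality I must show $\InAngles{\nabla u(\mq_t\mid F,h), \mq^0} = 0$, i.e., that the first coordinate of the gradient vanishes. By item 2 of \Cref{lemma:new concave formulation}, for $h=b_i$ the $j=1$ coordinate is $0$ when $i\ge 1$. When $i=0$ it equals $F^-(0) - b_0 = F^-(0)$. Here $F^-(0) = \inf\{v: F(v)\ge 0\} = 0$, so this coordinate also vanishes. Alternatively, this follows from the chain-rule identity, since $M\mq^0=\boldsymbol 0$ forces $\InAngles{\nabla_{\mq}u,\mq^0} = \InAngles{\nabla_{\mp}u,\boldsymbol 0}=0$, which avoids the edge case entirely.

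The main obstacle is differentiability and boundary issues: the gradient on the simplex boundary, and $F^-$ at endpoints. Since \Cref{thm:linearized regret} already handles this in $\+P$ and the map is linear, I would interpret gradients via the chain rule, consistent with item 2. No real difficulty is expected.
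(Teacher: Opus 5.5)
Your proposal is correct and follows essentially the paper's route: transfer the bound of \Cref{thm:linearized regret} through the linear map $M_{\+Q\rightarrow\+P}$, use $M_{\+Q\rightarrow\+P}(\mq^0)=\boldsymbol{0}$, and use the vanishing first gradient coordinate for the final equality. Your chain-rule identity $\nabla_{\mq}u = M^\top\nabla_{\mp}u$ makes explicit the ``direct verification'' $\InAngles{\nabla_{\mq}u,\mq}=\InAngles{\nabla_{\mp}u,\mp}$ that the paper only asserts, and your check that $F^-(0)=0$ covers the $h=b_0$ case that the paper's appeal to item 2 passes over.
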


\subsection{A Reduction to OLO over the Simplex: Known Value Distribution}  Starting from this section, we will always work with our new concave formulation where strategy set $Q = \Delta^{K+1}$ is the probability simplex. Based on \Cref{lemma:new concave formulation} and \Cref{corollary:linearized regret}, the following reduction converts any OLO algorithm over the simplex into a strategically-robust no-regret bidding algorithm.

\begin{algorithm}[!ht]
    \KwIn{An OLO algorithm $\texttt{ALG}$ over $Q = \Delta^{K+1}$ and value distribution $F$} 
    \caption{Bidding with Known Value Distribution via OLO}
    %Initialize $\mq_1 \in \+Q$ according to \texttt{ALG} \\
    \For{$t = 1,2, \ldots,T$}{
    Get $\mq_t$ from \texttt{ALG};\\
    Observe value $V_t \sim F$;\\
    Bid $A_t(V_t) = b_i$ if $V_t \in (F^-(\sum_{\ell=1}^{i} q_{t,\ell}), F^-(\sum_{\ell=1}^{i+1} q_{t,\ell})]$ for $0\le i\le K$;  \\
    Observe highest competing bid $h_t$; \\
    Forward $\nabla_{\mq} u(\mq_t \mid F, h_t)$ to \texttt{ALG}.
    }
    \label{alg:known}
\end{algorithm}
We recall that $\InNorms{\nabla_{\mq} u(\mq \mid F,h)}_{\infty} \le 1$ by \Cref{lemma:new concave formulation}. Thus we have the following guarantee on \Cref{alg:known}. 
\begin{theorem}[Strategically-Robust No-Regret Bidding from OLO] Given an OLO algorithm \texttt{ALG} over the simplex $ \Delta^{K+1}$, that has regret $g(T,K)$ against bounded utilities ($\ell_{\infty}$-norm $\le 1$), \Cref{alg:known} integrated with \texttt{ALG} satisfies
\begin{align*}
    \reg(A\mid F) \le g(T, K),\quad \sum_{t=1}^T \Rev(A_t \mid F) \le \Myer(F) \cdot T + g(T, K).
\end{align*}
\end{theorem}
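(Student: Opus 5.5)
The proof chains together \Cref{lemma:new concave formulation}, the concavity bound \eqref{eq:concave-linear}, and \Cref{corollary:linearized regret}. In each round the reduction feeds \texttt{ALG} the linear utility $\mg_t := \nabla_{\mq} u(\mq_t\mid F,h_t)$. By item 2 of \Cref{lemma:new concave formulation}, $\InNorms{\mg_t}_\infty\le 1$, so the regret guarantee of \texttt{ALG} applies to this sequence: for every $\mq\in\+Q$,
\begin{align*}
\sum_{t=1}^T \InAngles{\mg_t,\mq-\mq_t}\le g(T,K).
\end{align*}
The sequence $h_t$ may be chosen adaptively by the seller. The OLO guarantee holds against adaptive adversaries pathwise, so the bound survives taking expectations.

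First I check that the strategies played actually realize $\mq_t$. The bid rule in \Cref{alg:known} is exactly $s_{\mq_t}$ from item 4 of \Cref{lemma:new concave formulation}. It is non-decreasing and left-continuous. Because $F$ is absolutely continuous, $\-P_{v\sim F}(s_{\mq_t}(v)=b_{j-1}) = F(F^-(\sum_{\ell\le j}q_{t,\ell})) - F(F^-(\sum_{\ell< j}q_{t,\ell})) = q_{t,j}$, using $F(F^-(y))=y$ for continuous $F$. Hence the bidding probabilities of $A_t$ in the sense of item 3 are exactly $\mq_t$. Items 3 and 4 then give $u(A_t\mid F,h_t)=u(\mq_t\mid F,h_t)$ and $\Rev(A_t,h_t)=\Rev(\mq_t\mid h_t)$.

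\emph{Regret.} By item 3, every non-decreasing left-continuous $s$ has $u(s\mid F,h)=u(\mq_s\mid F,h)$ for some $\mq_s\in\+Q$. By Lemma 1 of \citep{kumar2024strategically}, restricting to such $s$ is without loss. Therefore
\begin{align*}
\reg(A\mid F)\le \max_{\mq\in\+Q}\sum_{t=1}^T \-E\bigl[u(\mq\mid F,h_t)-u(\mq_t\mid F,h_t)\bigr].
\end{align*}
Concavity (item 1) bounds each term by $\InAngles{\mg_t,\mq-\mq_t}$. Summing and applying the OLO guarantee gives $\reg(A\mid F)\le g(T,K)$. Strictly, the maximum over $\mq$ sits outside the expectation; I would bound $\-E[\sum_t\InAngles{\mg_t,\mq-\mq_t}]\le g(T,K)$ for each fixed $\mq$ and then take the maximum.

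\emph{Strategic robustness.} \Cref{corollary:linearized regret} gives
\begin{align*}
\sum_t\-E[\Rev(A_t,h_t)]\le \Myer(F)\, T+\sum_t \-E\bigl[\InAngles{\mg_t,\mq^0-\mq_t}\bigr].
\end{align*}
Applying the OLO regret bound with the fixed comparator $\mq=\mq^0\in\+Q$ bounds the last sum by $g(T,K)$. The only subtle point is the identification of $A_t$'s bidding probabilities with $\mq_t$, which needs absolute continuity of $F$ to avoid boundary mass at the quantile breakpoints. This is where any care is needed; the rest follows directly from the earlier results.
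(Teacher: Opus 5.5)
Your proof is correct and is the argument the paper intends: it states the theorem as an immediate consequence of \Cref{lemma:new concave formulation}, the concavity (linearized-regret) bound, and \Cref{corollary:linearized regret} applied with the fixed comparator $\mq^0$. You also fill in details the paper leaves implicit: the check that $A_t$'s bidding probabilities equal $\mq_t$ when $F$ is absolutely continuous, the treatment of adaptively chosen $h_t$, and the handling of the maximum that sits outside the expectation.
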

With bounded utilities, the classic Multiplicative Weights Update (MWU) algorithm enjoys a $O(\sqrt{T\log K})$ regret bound; see e.g.,\citep{arora2012multiplicative}. Implementing \Cref{alg:known} with MWU thus gives a bidding algorithm with $O(\sqrt{T\log K})$-regret and strategic robustness, improving the $O(\sqrt{KT})$ regret of online gradient ascent~\citep{kumar2024strategically}. This result is of particular interest when the number of bids, $K$, is not given exogenously but is determined by discretizing the continuous bidding space $[0,1]$. In this case, the regret typically incurs an $\Theta(\frac{T}{K})$ discretization error. By choosing $K = T$, our new algorithm still ensures $\sqrt{T\log T}$ regret, losing only a $\sqrt{\log T}$ factor. On the other hand, OGA becomes suboptimal, i.e., optimal tuning of $\sqrt{TK} + \frac{T}{K}$ leads to $T^{\frac{2}{3}}$ regret.

\section{Unknown Value Distribution}\label{sec:unknown value}
In this section, we no longer assume that the value distribution $F$ is known to the algorithm designer. We can no longer directly implement online learning algorithms with gradient feedback on the concave formulation since (1) the utility function $\mq \rightarrow u(\mq \mid F, h)$ and its gradient $\mq \rightarrow \nabla u(\mq \mid F, h)$ depend on the value distribution $F$; (2) the transformation from bidding probabilities $\mq$ to a bidding strategy $s$ also depends on $F$. In what follows, we first review the algorithm proposed in~\citep{kumar2024strategically} and discuss its limitations. Then we again present a reduction that constructs a strategically robust no-regret bidding algorithm from any OLO algorithm, even without knowing $F$. Our reduction enables the design of improved algorithms with stronger regret and strategic robustness guarantees in the setting of unknown value distributions.

\paragraph{Existing approach ~\citep{kumar2024strategically} and its limitations} For the unknown value distribution setting, \citet{kumar2024strategically} constructs an algorithm that requires an additional assumption---they require the density function $f$ of the CDF $F$ to have an upper bound $\bar{f} \ge \sup_{x \in [0,1]} f(x)$. 
Under this additional assumption,
they propose to run online gradient ascent (OGA) with the uniform distribution, i.e., pretend that the true unknown distribution $F$ is uniform. With an optimally tuned step size, 
%$\eta= 1/\sqrt{\bar{f}T}$, 
the resulting algorithm has both $O(\bar{f}^{\frac{1}{2}}K\sqrt{T})$ regret and strategic robustness. However, their approach has the following limitations.
\begin{itemize}
    \item[1.] Their algorithm only applies to value distributions with bounded density.  
    \item[2.] The regret and strategic robustness guarantees suffer $\Omega(K)$ dependence.
    \item[3.] The proofs of the regret guarantee and the strategic robustness guarantees require two distinct, complex potential-function analyses. It is not clear how to generalize the analysis to other algorithms beyond OGA.
\end{itemize}

Can we address all these limitations simultaneously, ideally using a simple and unified analysis, just as in the known value distribution case?
%A critical question is how to design algorithms for general unknown value distributions with both regret and strategic robustness guarantees, and with a simple and unified analysis. \emph{Is it possible to extend the improved $O(\sqrt{T\log K})$ bound to the unknown value distribution setting?}

\subsection{A Reduction with Unknown Value Distribution}
We address these limitations via a black-box reduction (\Cref{alg:reduction}) that converts any OLO algorithm over the simplex into a bidding algorithm for the unknown value distribution setting. The key components of the reduction are (1) the distribution estimation procedure (discussed in \Cref{sec:distribution estimation}) and (2) the conversion from bidding probabilities to a bidding strategy (discussed in \Cref{sec:robustness}). 

\begin{algorithm}[!ht]
    \KwIn{An OLO algorithm \texttt{ALG} over $\+Q = \Delta^{K+1}$, $T\ge 1$, $\delta \in (0,1)$.} 
    \caption{Bidding Algorithm for Unknown Value Distribution via OLO}
    % Initialize $A$ and \\
    \For{$t = 1,2, \ldots,T$}{
    Get $\mq_t$ from \texttt{ALG};\\
    Observe value $V_t \sim F$;\\
    Construct dominated empirical distributions $\hF_t$ (as shown in \eqref{eq:hF} in \Cref{sec:distribution estimation});\\
    Set strategy $s_t =s_{\mq_t, \hF_t}$ such that
    \[s_t(v) = s_{\mq_t, \hF_t}(v) := \begin{cases}
        b_0 & v \in [0, \hF_t^-(q_1)] \\
        b_i & i\in[K-1],  v \in (\hF_t^- (\sum_{\ell=1}^{i} q_\ell), \hF_t^-(\sum_{\ell=1}^{i+1} q_\ell) ] \\
        b_K & v\in ( \hF_t^- (\sum_{\ell=1}^{K} q_\ell),  1]
    \end{cases}\]\\
    Bid $s_t(V_t)$;  \\
    Observe highest competing bid $h_t$; \\
    Forward gradient $\mg_t:=\nabla u(\mq_t \mid \hF_t, h_t)$ to \texttt{ALG}.
    %Get updated strategy $\mq_{t+1}$.
    }
    \label{alg:reduction}
\end{algorithm}
Our reduction has the following guarantee. If the OLO algorithm has regret $g(T)$, then with probability at least $1 - \delta$, our meta algorithm is a bidding algorithm with regret and strategic robustness guarantees of $g(T) + O\InParentheses{\sqrt{T\log(T/\delta)}}$. 

\begin{theorem}\label{thm:reduction unknown}
    Given an OLO algorithm \texttt{ALG} over the simplex $ \Delta^{K+1}$ that has regret $g(T,K)$ against bounded utilities ($\ell_{\infty}$-norm $\le 1$), the regret of \Cref{alg:reduction} satisfies, with probability at least $1 - \delta$,
    \begin{align*}
        \reg(A\mid F) &\le \max_{\mq \in \+Q} \sum_{t=1}^T \-E\InBrackets{\InAngles{\nabla u(\mq_t\mid \hF_t, h_t), \mq - \mq_t}} + O\InParentheses{\sqrt{T\log(T/\delta)}}\\
        & \le g(T, K) + O\InParentheses{\sqrt{T\log(T/\delta)}}.
    \end{align*}
    Moreover, with probability at least $1 - \delta$, for $\mq^0 = (1, 0, \ldots, 0) \in \+Q$ 
    \begin{align*}
        \sum_{t=1}^T \-E\InBrackets{\Rev(s_t \mid F, h_t)} &\le \Myer(F) \cdot T + \sum_{t=1}^T \-E\InBrackets{\InAngles{\nabla u(\mq_t\mid \hF_t, h_t),\mq^0 -  \mq_t}} + O(\sqrt{T\log(T/\delta)}) \\
        &\le \Myer(F) \cdot T + g(T, K) + O(\sqrt{T\log(T/\delta)}). 
    \end{align*}
\end{theorem}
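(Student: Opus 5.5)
Since the construction of $\hF_t$ in \eqref{eq:hF} appears only later in the paper, I will assume it has three properties. First, $\hF_t$ is built from $V_1,\dots,V_{t-1}$ (and possibly $V_t$), by linearly interpolating the empirical CDF and then moving a mass of order $\epsilon_t := \sqrt{\log(T/\delta)/t}$ to $0$. Second, $F$ stochastically dominates $\hF_t$, so $\hF_t(v)\ge F(v)$ for all $v$. Third, $\sup_v|\hF_t(v)-F(v)|\le O(\epsilon_t)$. I will intersect the DKW events (\Cref{lem: DKW}) for all $t\le T$ with a union bound, which costs $\delta/T$ per round. On this event, revenue monotonicity then gives $\Myer(\hF_t)\le\Myer(F)$.

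The central step is a pair of per-round stability bounds. These compare the utility and revenue that the strategy $s_t=s_{\mq_t,\hF_t}$ actually obtains under $F$ with the formulation quantities computed under $\hF_t$. Let $\mq'_t$ denote the true bidding probabilities of $s_t$ under $F$, i.e.\ $q'_{t,j}=\-P_{v\sim F}(s_t(v)=b_{j-1})$. Since $s_t$ is non-decreasing with thresholds $\hF_t^-(\sum_{\ell\le i}q_{t,\ell})$, the cumulative sums satisfy
\[
\sum_{\ell\le i}q'_{t,\ell}=F\Bigl(\hF_t^-\bigl(\textstyle\sum_{\ell\le i}q_{t,\ell}\bigr)\Bigr).
\]
$\hF_t$ is continuous except for its atom at $0$, so this equals $\sum_{\ell\le i}q_{t,\ell}$ up to $O(\epsilon_t)$; the atom contributes only an $O(\epsilon_t)$ discrepancy, which is independent of $K$. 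The revenue in \eqref{eq:rev_Q} can be rewritten as $b_ip'_i+\varepsilon\sum_{j>i}p'_j$, where $\mp'$ is the tail vector of $\mq'_t$. The tail gaps are all $O(\epsilon_t)$, but a direct coordinatewise bound would give $O(K\varepsilon\epsilon_t)=O(\epsilon_t)$ only because $\varepsilon\le 1/K$. That suffices, so $|\Rev(\mq'_t\mid h)-\Rev(\mq_t\mid h)|=O(\epsilon_t)$. For utility I use item 3 of \Cref{lemma:new concave formulation}: $u(s_t\mid F,h)=u(\mq'_t\mid F,h)$. I will then compare this with $u(\mq_t\mid\hF_t,h)$ via the explicit formula \eqref{eq:u_Q}. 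The payment part is controlled exactly like revenue. For the integral part, I bound $|\int_a^1F^--\int_{a'}^1\hF_t^-|$ by $|a-a'|+\int_0^1|F^--\hF_t^-| = |a-a'|+\int_0^1|F-\hF_t|$, which is again $O(\epsilon_t)$. I also need the comparator side $\max_{s}\sum_t u(s\mid F,h_t)$. For a fixed optimal monotone $s^*$ I take $\mq^*$ as its probabilities under $F$ and bound $|u(\mq^*\mid F,h)-u(\mq^*\mid\hF_t,h)|=O(\epsilon_t)$ by the same integral estimate. Since $s^*$ is a single fixed comparator, no union over strategies is needed.

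Assembling the regret bound, I write $\reg\le\sum_t[u(\mq^*\mid\hF_t,h_t)-u(\mq_t\mid\hF_t,h_t)]+\sum_tO(\epsilon_t)$. Concavity (item 1 of \Cref{lemma:new concave formulation}) bounds the first sum by the linearized regret with gradients $\mg_t$. These satisfy $\|\mg_t\|_\infty\le1$, so the OLO guarantee gives at most $g(T,K)$. Summing $\epsilon_t$ gives $O(\sqrt{T\log(T/\delta)})$. For robustness, I apply the identity in the proof of \Cref{thm:linearized regret}/\Cref{corollary:linearized regret} with $\hF_t$ in place of $F$:
\[
\Rev(\mq_t\mid h)+\langle\nabla u(\mq_t\mid\hF_t,h),\mq_t\rangle\le\Myer(\hF_t)\le\Myer(F).
\]
I then transfer from $\mq_t$ to the true revenue $\Rev(s_t\mid F,h)=\Rev(\mq'_t\mid h)$ using the $O(\epsilon_t)$ stability bound and sum over $t$. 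Finally, the OLO regret against $\mq^0$ gives the $g(T,K)$ bound. Expectations over $h_t$ pass through because every bound is uniform in $h$.

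I expect the main obstacle to be showing that the atom at $0$ and the interpolation create only $K$-free error in the mapping $\mq_t\mapsto\mq'_t$, in particular controlling $F(\hF_t^-(x))-x$ uniformly in $x$. If the simple DKW-plus-domination argument does not handle the atom cleanly, I would first try defining the strategy to bid $b_0$ on all values $\le\hF_t^-(q_1)$ and absorbing the mass at $0$ into $q_1$, so that the discrepancy is at most the atom size $O(\epsilon_t)$.
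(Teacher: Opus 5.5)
Your proof is correct, but it handles the gap between the OLO iterate $\mq_t$ and the executed strategy $s_t=s_{\mq_t,\hF_t}$ differently from the paper. Throughout, write $\alpha_t:=\alpha(t,\delta/T)$.

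The paper takes $\mq'_t$ to be the bidding probabilities of $s_t$ under $\hF_t$. It proves the $\ell_1$ bound $\InNorms{\mq_t-\mq'_t}_1\le 2\alpha_t$; only coordinates below the atom move, because $\hF_t(\hF_t^-(x))=\max\{x,\alpha_t\}$. It then changes $F$ to $\hF_t$ at the level of the fixed strategy $s_t$ via item 1 of \Cref{lemma:robustness of u and Rev}. For robustness it applies \Cref{corollary:linearized regret} at $\mq'_t$. It must then transfer $\InAngles{\nabla u(\mq'_t\mid\hF_t,h),\mq'_t}$ back to $\mq_t$, using a H\"older term plus the sign argument ``Term I $\le 0$''.

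You instead take $\mq'_t$ under the true $F$. Its partial sums $F(\hF_t^-(x_i))$ are only sup-norm close to the partial sums $x_i$ of $\mq_t$. The gap is at most $\InNorms{F-\hF_t}_\infty+\alpha_t\le 3\alpha_t$ by items 1 and 3 of \Cref{lemma:empirical distribution}, which disposes of the step you flagged as the main obstacle. So $\InNorms{\mq_t-\mq'_t}_1$ need not be small. You compensate by noting two things: revenue and the payment part of utility depend on $\mq$ only through $b_ip_i+\varepsilon\sum_{j>i}p_j$, whose total weight is $\varepsilon K\le 1$; and the integral part depends on a single partial sum.

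This lets you use the Myerson identity directly at the iterate $\mq_t$ under $\hF_t$. The connection to the OLO regret against $\mq^0$ is then immediate, and the Term I step disappears. The paper's route, in exchange, isolates a reusable $\ell_1$ lemma and afterwards needs only generic Lipschitz bounds. Both routes use $\varepsilon K\le 1$ somewhere.

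One detail you should state explicitly: $p\,\hF_t^-(1-p)\le\Myer(\hF_t)$ requires $\hF_t(\hF_t^-(1-p))=1-p$ whenever $\hF_t^-(1-p)>0$. This holds because the only atom of $\hF_t$ is at $0$. Alternatively, domination gives $p\,\hF_t^-(1-p)\le p\,F^-(1-p)\le\Myer(F)$ directly, without invoking revenue monotonicity.
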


In particular, when we run the Multiplicative Weights Update algorithm, we get a bidding algorithm for unknown value distribution setting with regret and strategic robustness of $O\InParentheses{\sqrt{T(\log K+\log(T/\delta))}}$. Our results improves the $O(\bar{f}^{\frac{1}{2}}K\sqrt{T})$ guarantee from \citep{kumar2024strategically} by (1) removing the dependence on bounded density $\bar{f}$; (2) improving the linear dependence on $K$ to logarithmic dependence $\log K$, which is important when applying discretization to the continuous bid space setting as we have discussed earlier.

The rest of this section is devoted to proving \Cref{thm:reduction unknown}. We do so in a sequence of steps. In \Cref{sec:distribution estimation}, we discuss the construction of the dominated continuous empirical distributions $\{\hF_t\}_{t=1}^T$ and their properties. In \Cref{sec:robustness}, we discuss how to convert a bidding probability representation $\mq \in \+Q$ to a bidding strategy $s(\cdot)$ and control the error when the true value distribution $F$ is unknown. We also present robustness lemmas that establish the stability of utility and revenue under distributional shifts. In \Cref{sec:proof unknown reduction}, we present the proof of \Cref{thm:reduction unknown}.

\subsection{Value Distribution Estimation}\label{sec:distribution estimation}
We construct the following three distribution estimations in order.
\begin{itemize}[leftmargin=*]
    \item[1.] \textbf{Empirical Distribution.} The empirical distribution is defined with CDF $F_t$ such that $F_t(x):= \frac{1}{t}\sum_{\ell=1}^t\textbf{1}[V_\ell \le x]$. For any $\delta >0$, define $\varepsilon(t,\delta) = \sqrt{\log(2/\delta)/(2t)}$, DKW inequality (\Cref{lem: DKW}) states that with probability at least $1 - \delta$,
    \begin{align*}
        \max_{x \in [0,1]} |F(x) - F_t(x)| \le \varepsilon(t,\delta).
    \end{align*}
    \item[2.] \textbf{Continuous Interpolated Empirical Distribution.} Since $F$ is absolutely continuous, the value samples $\{V_\ell \sim F\}_{\ell=1}^t$ are distinct and not equal to $0$ or $1$ with probability $1$. Thus we assume that the event holds without loss of generality in the following. We order the samples as $0 < V_{(1)} < V_{(2)} < \ldots < V_{(t)} < 1$. We define $\tF_t$ as the piecewise-linear interpolation between the following $t+2$ points:
    \[
    (0,0), \left\{\InParentheses{V_{(\ell)}, (\ell-\frac{1}{2})/t}\right\}_{\ell \in [t]}, (1, 1). 
    \]
    We note that $\tF_t(0) =0$, $\tF_t(1) = 1$ and $\tF_t$ is strictly increasing over $[0,1]$. Thus we have $\tF_t(\tF_t^-(x)) = x$ for all $x \in [0,1]$. By definition, it (deterministically) holds that 
    $
        \max_{x \in [0,1]} |F_t(x) -\tF_t(x)|\le \frac{1}{2t}.
    $
    Denote $\alpha(t,\delta) = \varepsilon(t, \delta) + \frac{1}{2t}$, we have with probability $1 - \delta$,
    \begin{align*}
        \max_{x \in [0,1]} |F(x) - \tF_t(x)| \le \max_{x \in [0,1]} |F(x) - F_t(x)| + \max_{x \in [0,1]} |F_t(x) -\tF_t(x)| \le \alpha(t,\delta).
    \end{align*}
    \item[3.] \textbf{Dominated Continuous Empirical Distribution.} To guarantee revenue monotonicity, we further move $\alpha(t,\delta)$-mass to $0$ and construct the following distribution $\hF_t$ that is first-order stochastically dominated by the true distribution $F$:
    \begin{align}\label{eq:hF}
        \hF_t(x) = \min\{1, \tF_t(x) + \alpha(t,\delta)\}.
    \end{align}
    It is clear that with probability $1 - \delta$, $\hF_t(x) \ge F(x)$ for all $x\in [0,1]$ and thus $F^-(x) \ge \hF_t^-(x)$ for all $x \in [0,1]$. Revenue monotonicity gives $\Myer(F) \ge \Myer(\hF_t)$. Moreover, we can show that $\hF_t(\hF_t^-(x))= x$ for all $x \in [\alpha(t, \delta), 1]$.
\end{itemize}
Below, we summarize the guarantees on the dominated continuous empirical distribution $\hF_t$.
\begin{lemma}\label{lemma:empirical distribution}
    For any $\delta \in (0,1)$ and $t \ge 1$, denote $\alpha(t, \delta) = \sqrt{\frac{\log(2/\delta)}{2t}}+\frac{1}{2t}$. With probability $1 - \delta$, the following events hold
    \begin{itemize}
        \item[1.] $\InNorms{F - \hF_t}_\infty \le 2\alpha(t,\delta)$.
        \item[2.] $F \succcurlyeq \hF_t$, i.e., $\hF_t(x) \ge F(x)$ for all $x \in [0,1]$.
        \item[3.] It holds that 
        \begin{align}\label{eq:inverse}
        \hF_t(\hF_t^-(x)) = \begin{cases}
            \alpha(t,\delta) & \text{if } x \in [0, \alpha(t, \delta)] \\
            x &   \text{if } x \in [\alpha(t,\delta),1]
        \end{cases}
    \end{align}
    \end{itemize}
\end{lemma}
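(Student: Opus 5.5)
Throughout, I will work on the DKW event $\mathcal{E}_t$ on which $\max_{x\in[0,1]}|F(x)-F_t(x)|\le \varepsilon(t,\delta)$. By \Cref{lem: DKW} this event has probability at least $1-\delta$. I will also use the deterministic bound $\max_x|F_t(x)-\tF_t(x)|\le \frac{1}{2t}$, which gives $\max_x|F(x)-\tF_t(x)|\le\alpha(t,\delta)$ on $\mathcal{E}_t$. This bound follows from the interpolation construction: on $[V_{(\ell)},V_{(\ell+1)})$ the empirical CDF equals $\ell/t$, while $\tF_t$ increases linearly from $(\ell-\frac12)/t$ to $(\ell+\frac12)/t$. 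The two end pieces $[0,V_{(1)})$ and $[V_{(t)},1]$ are handled in the same way.

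For item 1, I write $\hF_t(x)-F(x) = \min\{1,\tF_t(x)+\alpha\}-F(x)$ with $\alpha=\alpha(t,\delta)$. When the minimum is $\tF_t(x)+\alpha$, this difference lies in $[\tF_t(x)-F(x),\,\tF_t(x)-F(x)+\alpha]\subseteq[-\alpha,2\alpha]$. When the minimum is $1$, it equals $1-F(x)\ge 0$, and it is at most $\tF_t(x)+\alpha-F(x)\le 2\alpha$. In both cases $|\hF_t-F|\le 2\alpha$. For item 2, if $\hF_t(x)=1$ the claim is trivial. Otherwise $\hF_t(x)=\tF_t(x)+\alpha\ge F(x)$, because $F(x)-\tF_t(x)\le\alpha$.

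Item 3 is the only step needing care, since it concerns the generalized inverse of a CDF with an atom. The function $\hF_t$ is nondecreasing and continuous on $(0,1]$, because $\tF_t$ is continuous and strictly increasing and the $\min$ with $1$ preserves continuity. Its value at $0$ is $\hF_t(0)=\min\{1,\alpha\}$, so it jumps from nothing to $\alpha$ at $0$ when $\alpha<1$. I will treat $\alpha\ge 1$ separately: then $\hF_t\equiv 1$, and the formula in \eqref{eq:inverse} should be read with $\alpha$ replaced by $1$, since the second case is vacuous or trivial.

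Assume $\alpha<1$. For $x\in[0,\alpha]$, the point $0$ already satisfies $\hF_t(0)=\alpha\ge x$, so $\hF_t^-(x)=0$ and $\hF_t(\hF_t^-(x))=\alpha$. For $x\in(\alpha,1]$, the set $\{v:\hF_t(v)\ge x\}$ is a closed interval $[v^*,1]$, which is nonempty because $\hF_t(1)=1$ and closed by continuity. Since $\hF_t(0)<x$, we have $v^*>0$. The function $\hF_t$ is continuous at $v^*$ and satisfies $\hF_t(v)<x$ for all $v<v^*$. By the intermediate value theorem, $\hF_t(v^*)=x$.

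The main obstacle is bookkeeping the endpoint $x=\alpha$, where both cases agree, and checking continuity of $\hF_t$ at points where the $\min$ switches to $1$. Both are routine once $\tF_t$ is known to be continuous and strictly increasing with $\tF_t(0)=0$. All three items then hold simultaneously on $\mathcal{E}_t$, which gives the stated probability $1-\delta$.
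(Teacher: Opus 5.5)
Your proof is correct and follows the paper's own justification, given in the construction preceding the lemma: the DKW event, the deterministic $\frac{1}{2t}$ interpolation error, then the shift by $\alpha(t,\delta)$ for items 1 and 2, while item 3 is a deterministic property of $\hF_t$. Your caveat for $\alpha(t,\delta)\ge 1$ is a genuine correction to the statement as written: for instance $\alpha(1,\delta)>1$ for every $\delta\in(0,1)$, so the first case of \eqref{eq:inverse} must read $\min\{1,\alpha(t,\delta)\}$. It is harmless downstream, because there $\InNorms{\mq_t-\mq_t'}_1\le 2\le 2\alpha(t,\delta)$ holds trivially and only $\hF_t(\hF_t^-(x))\ge x$ is used.
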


\subsection{Robustness under Distribution Estimation} \label{sec:robustness}
We first show that the utility and revenue of a strategy under different value distributions $F$ and $F'$ are robust: the differences scale with $\InNorms{F-F'}_\infty$.
\begin{lemma}\label{lemma:robustness of u and Rev}
    For any distribution $\md \in \Delta^{K+1}$ of highest competing bids and two distribution $F, F'$ over $[0, 1]$,
    \begin{itemize}
        \item[1.] For any non-decreasing and left-continuous bidding strategy $s: [0,1] \rightarrow \{b_0, b_1, \ldots, b_K\}$, 
        \begin{align*}
            |u(s\mid F,\md) - u(s \mid F', \md)| \le 4\InNorms{F-F'}_\infty, \quad |\Rev(s\mid F, \md) - \Rev(s \mid F', \md)| \le 2 \InNorms{F-F'}_\infty.
        \end{align*}
        \item[2.] For any $\mq, \mq' \in \+Q$, we have
        \begin{align*}
            &|u(\mq \mid F,\md) - u(\mq \mid F', \md)|\le \InNorms{F-F'}_\infty, \\
            &|u(\mq \mid F,\md) - u(\mq' \mid F, \md)|\le 2\InNorms{\mq-\mq'}_1, \quad |\Rev(\mq \mid \md) - \Rev(\mq' \mid \md) \le \InNorms{\mq - \mq'}_1.
        \end{align*}
    \end{itemize}
\end{lemma}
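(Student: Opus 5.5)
By linearity in $\md$ it suffices to take a deterministic competing bid $h=b_i$; every claimed bound is preserved under averaging over $h\sim\md$. I would use two tools throughout: the layer-cake identity $\-E_{v\sim F}[\phi(v)]=\phi(0)+\int_0^1 (1-F(x))\,d\phi(x)$ for non-decreasing $\phi$, and the observation that changing $\mq$ inside $\+Q$ moves only the integration limits and linear coefficients in \eqref{eq:u_Q}.

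\emph{Item 1, revenue.} For a fixed monotone left-continuous $s$, write $\Rev(s\mid F,b_i)=\-E_{v\sim F}[\phi(v)]$ with $\phi(v)=s(v)\mathbf{1}(s(v)\ge b_i)$. The function $\phi$ is non-decreasing, takes values in $[0,1]$, and has $\phi(0)\ge 0$. The layer-cake identity then gives
\[
|\Rev(s\mid F,h)-\Rev(s\mid F',h)|=\Bigl|\int (F'-F)\,d\phi\Bigr|\le \InNorms{F-F'}_\infty\cdot(\phi(1)-\phi(0))\le \InNorms{F-F'}_\infty,
\]
which is within the stated constant $2$.

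\emph{Item 1, utility.} Split $u(s\mid F,h)=\-E_F[v\,\mathbf{1}(s(v)\ge h)]-\Rev(s\mid F,h)$. The indicator $\mathbf{1}(s(v)\ge b_i)=\mathbf{1}(v>v_i)$ selects an up-set $(v_i,1]$, because $s$ is monotone and left-continuous. Hence
\[
\-E_F\bigl[v\,\mathbf{1}(v>v_i)\bigr]=v_i(1-F(v_i))+\int_{v_i}^1(1-F(x))\,dx,
\]
and its change when $F$ is replaced by $F'$ is at most $v_i\InNorms{F-F'}_\infty+(1-v_i)\InNorms{F-F'}_\infty\le\InNorms{F-F'}_\infty$. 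Combining this with the revenue bound gives a total of at most $2\InNorms{F-F'}_\infty\le 4\InNorms{F-F'}_\infty$.

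\emph{Item 2, change in $F$ at fixed $\mq$.} In \eqref{eq:u_Q} only the term $\int_{c}^1F^-(u)\,du$ depends on $F$, with $c=\sum_{\ell\le i}q_\ell$. This term is the area of the region $\{(u,x):u\in[c,1],\,x\le F^-(u)\}$. I would rewrite it as
\[
\int_c^1F^-(u)\,du=\int_0^1\bigl(1-\max(F(x),c)\bigr)\,dx,
\]
using $F^-(u)>x\iff u>F(x)$. The map $y\mapsto\max(y,c)$ is $1$-Lipschitz, so the difference between $F$ and $F'$ is at most $\InNorms{F-F'}_\infty$. Justifying this Fubini/inverse identity for a generalized inverse with atoms is the step I expect to need the most care.

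\emph{Item 2, change in $\mq$ at fixed $F$.} Revenue is linear in $\mq$ with coefficients $\sum_i d_i b_j\mathbf{1}(j\ge i)\in[0,1]$, so it changes by at most $\InNorms{\mq-\mq'}_1$. For utility, the linear part changes by at most $\InNorms{\mq-\mq'}_1$. In the integral part the limit $c$ moves by $|\sum_{\ell\le i}(q_\ell-q'_\ell)|\le\InNorms{\mq-\mq'}_1$, and since $F^-\in[0,1]$ the integral changes by at most that amount as well. Adding the two parts gives the stated $2\InNorms{\mq-\mq'}_1$.
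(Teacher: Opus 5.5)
Your proof is correct, but it takes a different route from the paper's, most visibly in item 1.

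\textbf{Item 1.} The paper works in quantile space. It starts from the representation $u(s\mid F,\md)=\sum_i d_i\bigl(\int_{F(v_i)}^1F^{-}(u)\,du-\sum_{j\ge i}b_j(F(v_{j+1})-F(v_j))\bigr)$, imported from prior work, and sums by parts using $b_j=j\varepsilon$. It then bounds the moving lower limit and the linear terms by $3\|F-F'\|_\infty$ in total. Finally it converts $\int_0^1|F^{-}-(F')^{-}|$ into $\int_0^1|F-F'|$ via the cited 1D Wasserstein identity; this is where the constants $4$ and $2$ come from. You stay in value space instead. You split $u(s\mid F,b_i)=\mathbb{E}_F[v\,\mathbf{1}(v>v_i)]-\Rev(s\mid F,b_i)$ and write each piece as an integral of $1-F$ against a monotone function. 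This gives the sharper constants $2$ (utility) and $1$ (revenue), and needs neither the quantile formula nor the Wasserstein identity.

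\textbf{Item 2, dependence on $F$.} The paper bounds $\bigl|\int_c^1(F^{-}-(F')^{-})\bigr|\le\int_0^1|F^{-}-(F')^{-}|$ and cites the same identity. Your exact formula $\int_c^1F^{-}(u)\,du=\int_0^1\bigl(1-\max(F(x),c)\bigr)\,dx$, combined with the $1$-Lipschitzness of $y\mapsto\max(y,c)$, gives the same constant. In effect it is a self-contained proof of the truncated identity.

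The step you flagged is routine. For a right-continuous CDF the infimum defining $F^{-}(u)$ is attained, so $F^{-}(u)\le x\iff u\le F(x)$, and Tonelli applies to the nonnegative integrand $\mathbf{1}(x<F^{-}(u))$. Atoms, such as the one $\hF_t$ places at $0$, cause no difficulty. Likewise, in item 1 the case $\{s\ge b_i\}=[0,1]$ rather than $(0,1]$ is harmless because of the factor $v$.

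\textbf{Item 2, dependence on $\mq$.} Your Lipschitz bounds coincide with the paper's.

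\textbf{Comparison.} The paper's route buys uniformity with the concave formulation it already uses. Yours buys elementary, self-contained arguments and better constants. Those constants only improve the hidden constants in the downstream $O(\sqrt{T\log(T/\delta)})$ terms.
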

The proof of \Cref{lemma:robustness of u and Rev} is standard with an application of the 1-dimensional Wasserstein identity. We present the proof in \Cref{app:lemma robustness}.

\paragraph{Bidding Strategies and Bidding Probabilities.} When the OLO algorithm produces bidding probabilities $\mq_t \in \+Q$, we need to decide a bidding strategy $s_t(\cdot)$. The subtlety here is that since $\hF_t$ is not absolutely continuous, the identity established in item 4 of \Cref{lemma:new concave formulation} no longer holds. 

More specifically, recall that in \Cref{alg:reduction}, we construct the following non-decreasing and left-continuous bidding strategy $s_t = s_{\mq_t, \hF_t} : [0,1] \rightarrow \{b_0, b_1, \ldots,b_{K}\}$ such that 
\begin{align*}
    s_t(v) = s_{\mq_t, \hF_t}(v) := \begin{cases}
        b_0 & v \in [0, \hF_t^-(q_1)] \\
        b_i & i\in[K-1],  v \in (\hF_t^- (\sum_{\ell=1}^{i} q_\ell), \hF_t^-(\sum_{\ell=1}^{i+1} q_\ell) ] \\
        b_K & v\in ( \hF_t^- (\sum_{\ell=1}^{K} q_\ell),  1]
    \end{cases}
\end{align*}
We remark that $u(s_t \mid \hF_t, h_t)$ is not equal to $u(\mq_t \mid \hF_t, h_t)$ in general. This is because $\hF_t$ has a point mass at $0$ (recall $\hF_t(0) = \alpha(t,\delta)$), and we cannot invoke item 4 in \Cref{lemma:new concave formulation}.

To proceed, we denote the bidding probabilities of the strategy $s_t=s_{\mq_t, \hF_t}$ as $\mq_t' \in \+Q$: 
\begin{align*}
    q'_{t,i} := \-P_{v \sim \hF_t}(s_t(v) = b_{i-1})=\begin{cases}
        \hF_t(\hF_t^-(q_{t,1})) & i=1 \\
        \hF_t\InParentheses{\hF_t^-\InParentheses{\sum_{\ell=1}^{i}q_{t,\ell}}} - \hF_t\InParentheses{\hF_t^-\InParentheses{\sum_{\ell=1}^{i-1}q_{t,\ell}}} & i \in [2, K+1]
    \end{cases}
\end{align*}
By item 3 in \Cref{lemma:new concave formulation}, we have $u(s_{t} \mid \hF_t, h_t) = u(\mq_t'\mid \hF_t, h_t)$. Moreover, we can use item 3 in \Cref{lemma:empirical distribution} to bound $\InNorms{\mq_t -\mq_t'}_1$. %We defer the full proof to \Cref{app:proof q q'}.
\begin{lemma}\label{lemma:q q'}
    In the context of \Cref{lemma:empirical distribution}, we have $u(s_{t} \mid \hF_t, h_t) = u(\mq_t'\mid \hF_t, h_t)$ and $\InNorms{\mq_t-\mq_t'}_1 \le 2\alpha(t,\delta)$.
\end{lemma}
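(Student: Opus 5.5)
The first claim follows directly from item 3 of \Cref{lemma:new concave formulation}. We apply it with the value distribution $\hF_t$ in place of $F$ and with $\md$ the point mass at $h_t$. The strategy $s_t = s_{\mq_t,\hF_t}$ is non-decreasing and left-continuous by construction, since its thresholds $\hF_t^-(\sum_{\ell\le i} q_{t,\ell})$ are non-decreasing in $i$ and the intervals are left-open and right-closed. Item 3 only needs $F$ to be a general distribution, not an absolutely continuous one, as remarked after \Cref{thm:concave formulation}. Its bidding probabilities under $\hF_t$ are exactly $\mq_t'$, so $u(s_t\mid \hF_t,h_t)=u(\mq_t'\mid\hF_t,h_t)$.

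For the $\ell_1$ bound, write $Q_i := \sum_{\ell=1}^i q_{t,\ell}$ and $Q'_i := \sum_{\ell=1}^i q'_{t,\ell}$, with $Q_0=Q'_0=0$ and $Q_{K+1}=Q'_{K+1}=1$.

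\textbf{Step 1: identify the cumulative sums.} From the definition of $\mq_t'$, the sums telescope to $Q'_i = \hF_t(\hF_t^-(Q_i))$ for $i\in[K]$. This is the probability under $\hF_t$ that $v \le \hF_t^-(Q_i)$, i.e., that $s_t(v)\le b_{i-1}$. For $i=K+1$ we have $Q'_{K+1}=1=Q_{K+1}$, consistent with the last bid interval extending to $1$.

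\textbf{Step 2: apply item 3 of \Cref{lemma:empirical distribution}.} Equation \eqref{eq:inverse} gives $Q'_i = \max\{Q_i,\alpha\}$ with $\alpha=\alpha(t,\delta)$. So $Q'_i = Q_i$ whenever $Q_i\ge\alpha$, and $Q'_i=\alpha$ otherwise.

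\textbf{Step 3: bound the $\ell_1$ difference.} Let $m$ be the largest index with $Q_m<\alpha$ (take $m=0$ if none; $m\le K$ since $Q_{K+1}=1\ge\alpha$, and we may assume $\alpha<1$, otherwise the bound $\|\mq_t-\mq_t'\|_1\le 2\le 2\alpha$ is trivial). Since $Q_i$ is non-decreasing in $i$, we have $Q_i<\alpha$ exactly for $i\le m$. We split $q'_{t,i}-q_{t,i} = (Q'_i-Q'_{i-1})-(Q_i-Q_{i-1})$ into three ranges:
\begin{itemize}
\item For $i\ge m+2$, both cumulative sums agree, so $q'_{t,i}=q_{t,i}$.
\item For $2\le i\le m$, we have $q'_{t,i}=0$, contributing $\sum_{i=2}^m q_{t,i}$.
\item At $i=1$ (when $m\ge1$), $q'_{t,1}=\alpha$ versus $q_{t,1}$.
\item At $i=m+1$, $q'_{t,m+1}=Q_{m+1}-\alpha$ versus $Q_{m+1}-Q_m$, a difference of $\alpha-Q_m$.
\end{itemize}
Summing absolute values gives $|\alpha-q_{t,1}| + (Q_m-q_{t,1}) + (\alpha-Q_m) \le 2\alpha$, using $q_{t,1}\le Q_m<\alpha$. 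The case $m=0$ gives difference $0$.

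The main point of care is Step 1. We must handle the boundary convention for $b_K$ (the interval up to $1$ rather than $\hF_t^-(1)$) and check that $\hF_t(\hF_t^-(\cdot))$ really equals the CDF mass of the bid-at-most-$b_{i-1}$ event, given that $\hF_t$ has an atom at $0$. This is where \eqref{eq:inverse} and the right-continuity of $\hF_t$ are used.
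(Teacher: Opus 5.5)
Your proof is correct and follows essentially the same argument as the paper. Your index $m$ is the paper's $i^*-1$, and the same coordinatewise case analysis (first coordinate raised to $\alpha(t,\delta)$, middle coordinates zeroed, one boundary coordinate shifted by $\alpha(t,\delta)-Q_m$, tail unchanged) gives the $2\alpha(t,\delta)$ bound. Packaging this as $Q'_i=\max\{Q_i,\alpha\}$ and disposing of the case $\alpha\ge 1$ separately are minor tidy-ups. The case $\alpha\ge 1$ is in fact needed here, since \eqref{eq:inverse} implicitly requires $\alpha(t,\delta)\le 1$, which fails for small $t$.
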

\begin{proof}
We note that as $\sum_{\ell=1}^{K+1} q_{t,\ell}=1$ and $\sum_{\ell=1}^0 q_{t,\ell} = 0$, we can define $i^* = \argmin\{ i \in [K+1]: \sum_{\ell=1}^i q_\ell \ge \alpha(t,\delta)\}$. 
We use \eqref{eq:inverse} from \Cref{lemma:empirical distribution} extensively in the following proof. Recall that
\begin{align*}
    \hF_t(\hF_t^-(x)) = \begin{cases}
            \alpha(t,\delta) & \text{if } x \in [0, \alpha(t, \delta)] \\
            x &   \text{if } x \in [\alpha(t,\delta),1]
    \end{cases}
\end{align*}

If $i^* = 1$, then by the definition of $\mq_t'$ and that $\hF_t(\hF_t^-(x)) = x$ for all $x \ge \alpha(t, \delta)$, we have $q'_{t,i} = q_{t,i}$ for all $i \in [K+1]$, and $\mq_t= \mq'_t$. In the following, we consider the case where $1 < i^* \le K+1$. We repeatedly use the definition of $\hF_t(\hF_t^-(x))$ to determine the value of $q'_{t,i}$.
\begin{itemize}
    \item $i =1$: since $q_{t,1} < \alpha(t,\delta)$, we have $q_{t,1}' = \hF_t(\hF_t^-(q_{t,1})) = \alpha(t,\delta)$. Thus $|q_{t,1} - q_{t,1}'| = \alpha(t, \delta) - q_{t,1}$.
    \item $1 < i < i^*$: since $\sum_{\ell=1}^i q_{t,\ell} < \alpha(t,\delta)$
    \begin{align*}
        q_{t,i}' = \hF_t\InParentheses{\hF_t^-\InParentheses{\sum_{\ell=1}^{i}q_{t,\ell}}} - \hF_t\InParentheses{\hF_t^-\InParentheses{\sum_{\ell=1}^{i-1}q_{t,\ell}}} = \alpha(t, \delta) - \alpha(t, \delta) = 0. 
    \end{align*}
    Thus $|q_{t,i} - q_{t,i}'| = q_{t,i}$.
    \item $i = i^*$: since $\sum_{\ell=1}^i q_{t,\ell} \ge \alpha(t, \delta)$ and  $\sum_{\ell=1}^{i-1} q_{t,\ell} <\alpha(t, \delta)$, we have
     \begin{align*}
        q_{t,i}' = \hF_t\InParentheses{\hF_t^-\InParentheses{\sum_{\ell=1}^{i}q_{t,\ell}}} - \hF_t\InParentheses{\hF_t^-\InParentheses{\sum_{\ell=1}^{i-1}q_{t,\ell}}} = \sum_{\ell=1}^{i}q_{t,\ell} - \alpha(t, \delta)
    \end{align*}
    Thus $|q_{t,i} - q_{t,i}'| = \alpha(t,\delta) - \sum_{\ell=1}^{i-1} q_{t,\ell}$ (since the latter is smaller than $\alpha(t,\delta)$).
    \item $i > i^*$: since $\sum_{\ell=1}^{i-1} q_{t,\ell} \ge \alpha(t, \delta)$, we have
     \begin{align*}
        q_{t,i}' = \hF_t\InParentheses{\hF_t^-\InParentheses{\sum_{\ell=1}^{i}q_{t,\ell}}} - \hF_t\InParentheses{\hF_t^-\InParentheses{\sum_{\ell=1}^{i-1}q_{t,\ell}}} = q_{t,i}
    \end{align*}
    Thus $|q_{t,i} - q'_{t,i}| = 0$.
\end{itemize}
In summary, we have
\begin{align*}
    \InNorms{\mq_t-\mq'_t}_1 &= \sum_{i=1}^{K+1} |q_{t,i} - q_{t,i}'| = \alpha(t,\delta) - q_{t,1} +  \sum_{i=2}^{i^*-1} q_{t,i} + \alpha(t, \delta) - \sum_{i=1}^{i^*-1} q_{t,i} \le  2\alpha(t,\delta).
\end{align*}
This completes the proof.
\end{proof}

\subsection{Proof of \Cref{thm:reduction unknown}}\label{sec:proof unknown reduction}
\begin{proof} 
With probability $1 - \delta$, all the good events regarding the empirical distribution estimations $\{\hF_t\}_{t \in [T]}$ presented in \Cref{lemma:empirical distribution} hold simultaneously with error $\alpha(t,\delta/T) = \sqrt{\log(2T/\delta)/(2t)}+1/(2t)$, respectively. We assume these approximation bounds in the following analysis. Specifically, we have
\begin{align}\label{eq:distribution estimation}
    \sum_{t=1}^T \InNorms{F - \hF_t}_\infty \le 2\sum_{t=1}^T \alpha\InParentheses{t, \frac{\delta}{T}} = O\InParentheses{\sqrt{T\log(T/\delta)}}.  
\end{align}
Recall that $s_t = s_{\mq_t, \hF_t}$ is constructed from $\mq_t \in \+Q$ and the dominated empirical distribution $\hF_t$. We define $\mq'_t \in \+Q$ such that $q'_{t,j}:=\-P_{v \sim \hF_t}(s_t(v) = b_{j-1})$ for all $j \in [K+1]$. By item 3 in \Cref{lemma:new concave formulation}, we have $u(s_t \mid \hF_t, h) = u(\mq_t' \mid \hF_t, h)$ and $\Rev(s_t \mid \hF_t, h) = \Rev(\mq_t' \mid h)$ for any $h$.  By \Cref{lemma:q q'}, we have $\InNorms{\mq_t - \mq_t'}_1 \le 2\alpha(t,\delta/T)$ and thus
\begin{align}\label{eq:q estimation}
    \sum_{t=1}^T \InNorms{\mq_t - \mq_t'}_1 \le 2\sum_{t=1}^T \alpha\InParentheses{t, \frac{\delta}{T}} = O(\sqrt{T\log(T/\delta)}).
\end{align}.

\paragraph{Regret} We first prove the regret guarantee. 
Specifically, for the regret benchmark, we have 
\begin{align*}
    \max_{s(\cdot)} \sum_{t=1}^T \-E[u(s\mid F, h_t)] &= \max_{\mq \in \+Q} \sum_{t=1}^T \-E[u(\mq \mid F, h_t)] \tag{item 3 in \Cref{lemma:new concave formulation}} \\
    &\le \max_{\mq \in \+Q} \sum_{t=1}^T \-E[u(\mq \mid \hF_t, h_t)] + 2\sum_{t=1}^T\InNorms{F-\hF_t}_\infty, \tag{\Cref{lemma:robustness of u and Rev}}\\
    &\le \max_{\mq \in \+Q} \sum_{t=1}^T \-E[u(\mq \mid \hF_t, h_t)] + O\InParentheses{\sqrt{T\log(T/\delta)}}. \tag{by \eqref{eq:distribution estimation}}
\end{align*}
On the other hand, for the algorithm's total utility, we have
\begin{align*}
    \sum_{t=1}^T \-E[u(s_t \mid F, h_t)] &\ge \sum_{t=1}^T \-E[u(s_t \mid \hF_t, h_t)] -4 \sum_{t=1}^T \InNorms{F - \hF_t}_\infty \tag{item 1 in \Cref{lemma:robustness of u and Rev}}\\
    &= \sum_{t=1}^T \-E[u(\mq_t' \mid \hF_t, h_t)] -4 \sum_{t=1}^T \InNorms{F - \hF_t}_\infty \tag{\Cref{lemma:q q'}}\\
    &\ge \sum_{t=1}^T \-E[u(\mq_t \mid \hF_t, h_t)] -2\sum_{t=1}^T \InNorms{\mq_t - \mq_t'}_1 - 4 \sum_{t=1}^T \InNorms{F - \hF_t}_\infty. \tag{\Cref{lemma:robustness of u and Rev}}\\
    &\ge \sum_{t=1}^T \-E[u(\mq_t \mid \hF_t, h_t)] - O\InParentheses{\sqrt{T\log(T/\delta)}}.\tag{by \eqref{eq:q estimation}}
\end{align*}
Combining the above two inequalities with \eqref{eq:distribution estimation} and \eqref{eq:q estimation} gives
\begin{align*}
    \reg(A \mid F) %&= \max_{s(\cdot)} \sum_{t=1}^T \-E[u(s\mid F, h_t)] - \sum_{t=1}^T \-E[u(s_t \mid F, h_t)]\\
    &\le \max_{\mq \in \+Q} \sum_{t=1}^T \-E[u(\mq \mid \hF_t, h_t)] -  \sum_{t=1}^T \-E[u(\mq_t \mid \hF_t, h_t)] + O\InParentheses{\sqrt{T\log(T/\delta)}}\\
    &\le \max_{\mq \in \+Q} \sum_{t=1}^T \-E\InBrackets{\InAngles{\nabla u(\mq_t\mid \hF_t, h_t), \mq - \mq_t}} + O\InParentheses{\sqrt{T\log(T/\delta)}}.
\end{align*}

\paragraph{Strategic Robustness} 
We first upper bound the revenue of the seller as follows:
\begin{align*}
    \sum_{t=1}^T \-E[\Rev(s_t \mid F, h_t)] \le \sum_{t=1}^T \-E[\Rev(s_t \mid \hF_t, h_t)] + 2\sum_{t=1}^T\InNorms{F-\hF_t}_\infty.
\end{align*}
Recall that the bidding strategy $s_t = s_{\mq_t, \hF_t}$ is defined using the dominated value distribution $\hF_t$. Since $F$ first-order dominates $\hF_t$ and revenue monotonicity, we have.
\begin{align*}
    \sum_{t=1}^T \-E[\Rev(s_t \mid \hF_t, h_t)] &= \sum_{t=1}^T \-E[\Rev(\mq_t' \mid \hF_t, h_t)] \tag{item 3 in \Cref{lemma:new concave formulation}}\\
    &\le \Myer(\hF_t) - \sum_{t=1}^T \-E \InBrackets{ \InAngles{\nabla u(\mq_t'\mid \hF_t, h_t), \mq_t'}}  \tag{\Cref{corollary:linearized regret}}\\
    &\le \Myer(F)  - \sum_{t=1}^T \-E \InBrackets{\InAngles{\nabla u(\mq_t'\mid \hF_t, h_t), \mq_t'}},
\end{align*}
We now bound the last term. Fix any $h_t = b_i$, we have
\begin{align*}
    &- \sum_{t=1}^T \InAngles{\nabla u(\mq_t'\mid \hF_t, h_t), \mq_t'} \\
    &= - \sum_{t=1}^T \InAngles{\nabla u(\mq_t\mid \hF_t, h_t), \mq_t} + \sum_{t=1}^T \InAngles{\nabla u(\mq_t\mid \hF_t, h_t), \mq_t - \mq_t'}  + \sum_{t=1}^T \InAngles{\nabla u(\mq_t'\mid \hF_t, h_t) - \nabla u(\mq_t\mid \hF_t, h_t), - \mq_t'} \\
    &\le  - \sum_{t=1}^T \InAngles{\nabla u(\mq_t\mid \hF_t, h_t), \mq_t} + \sum_{t=1}^T \InNorms{\nabla u(\mq_t \mid \hF_t, h_t)}_\infty \cdot\InNorms{\mq_t - \mq_t'}_1 \tag{Cauchy-Schwarz}\\
    &\quad + \underbrace{\sum_{t=1}^T \sum_{j=i+1}^{K+1} q_{t,j}' \cdot \InParentheses{\hF_t^-\InParentheses{\sum_{\ell=1}^i q_{t,\ell}} - \hF_t^-\InParentheses{\sum_{\ell=1}^i q'_{t,\ell}}}}_{\text{Term I}} \tag{item 2 in \Cref{lemma:new concave formulation}} \\
    &\le - \sum_{t=1}^T \InAngles{\nabla u(\mq_t\mid \hF_t, h_t), \mq_t} + \sum_{t=1}^T \InNorms{\mq_t - \mq_t'}_1  + \text{Term I} \tag{$\InNorms{\nabla u(\mq_t \mid \hF_t, h_t)}_\infty \le 1$ by \Cref{lemma:new concave formulation}} 
    %&\le - \sum_{t=1}^T \InAngles{\nabla u(\mq_t\mid \hF_t, h_t), \mq_t} + O\InParentheses{\sqrt{T\log(T/\delta)}} + \text{Term I}. \tag{by \eqref{eq:q estimation}}
\end{align*}
We prove Term $\mathrm{I} \le 0$ as follows. Recall the definition of $\mq_t'$:
\begin{align*}
    q_{t,i}' := \-P_{v \sim \hF_t}(s_t(v) = b_{i-1})=\begin{cases}
        \hF_t(\hF_t^-(q_{t,1})) & i=1 \\
        \hF_t\InParentheses{\hF_t^-\InParentheses{\sum_{\ell=1}^{i}q_{t,\ell}}} - \hF_t\InParentheses{\hF_t^-\InParentheses{\sum_{\ell=1}^{i-1}q_{t,\ell}}} & i \in [2, K+1]
    \end{cases}
\end{align*}
Using fact that $\hF_t(\hF_t^-(x)) \ge x$ for all $x \in [0,1]$, we have
\begin{align*}
    \sum_{\ell=1}^i q_{t,\ell}' = \hF_t\InParentheses{\hF_t^-\InParentheses{\sum_{\ell=1}^i q_{t,\ell}}} \ge \sum_{\ell=1}^i q_{t,\ell}, \forall i \in [K+1].
\end{align*}
Since $\hF_t^-$ is non-decreasing, this implies 
\begin{align*}
    \hF_t^-\InParentheses{\sum_{\ell=1}^i q_{t,\ell}} - \hF_t^-\InParentheses{\sum_{\ell=1}^i q'_{t,\ell}} \le 0, \forall i \in [K+1],
\end{align*}
which further implies Term $\mathrm{I} \le 0$. 

Combining all the inequalities above with \eqref{eq:distribution estimation}, \eqref{eq:q estimation}, and the fact that $\InAngles{\nabla u(\mq \mid G, h), \mq^0} = 0$ for any $G$ and $h$, we have
\begin{align*}
    \sum_{t=1}^T \-E\InBrackets{\Rev(s_t \mid F, h_t)} \le \Myer(F) \cdot T + \sum_{t=1}^T \-E\InBrackets{\InAngles{\nabla u(\mq_t\mid \hF_t, h_t),\mq^0 -  \mq_t}} + O(\sqrt{T\log(T/\delta)}).
\end{align*}
This completes the proof.
\end{proof}

\section{Conclusion}
In this paper, we show that sublinear linearized regret suffices for strategic robustness in first-price auctions, by simple black-box reductions that convert any OLO algorithm into a strategically robust no-regret bidding algorithm in both the known and unknown value settings. Our reductions enable algorithms with better regret and strategic robustness guarantees. Our results also show the unexpected power of linearized regret over regret. Interesting future directions include exploring the power of linearized regret in other settings (see a similar-type result in online conformal prediction~\citep{liu2026onlineconformalpredictionuniversal}) and how to achieve strategic robustness by simple learning algorithms other than no-swap-regret algorithms in other game settings.

\subsection*{Acknowledgement}
YC is supported by the NSF award CCF-2342642. WZ is supported by the NSF award  CCF-2342642 and a research fellowship from the Center for Algorithms, Data, and Market Design at Yale (CADMY).
HL is supported by NSF award IIS-1943607.
%%%%%%%%%%%%%%%%%%%%%%%%%%%%%%%%%%%%%%%%%%%%%%%%%%%%%%%%%%%%
% \bibliographystyle{plainnat}
% \bibliography{ref,references}
\printbibliography
%%%%%%%%%%%%%%%%%%%%%%%%%%%%%%%%%%%%%%%%%%%%%%%%%%%%%%%%%%%%
\appendix

\crefalias{section}{appendix} % uncomment if you are using cleveref

\section{Useful Lemmas}
\begin{lemma}[1D Wasserstein identity]\label{lem: wass}
    Let $F, G: \mathbb{R}\to [0,1]$ be two CDFs and $F^-, G^-$ be their generalized inverses. Then 
    \begin{align*}
        \int_{0}^1  |F^-(u) - G^-(u)| \cdot \mathrm{d}u= \int_{-\infty}^\infty |F(v) - G(v)| dv. 
    \end{align*}
    %(Intuitively, these two expressions are the areas between the two curves $y=F(x)$ and $y=G(x)$.)
\end{lemma}

\begin{lemma}[Dvoretzky–Kiefer–Wolfowitz inequality~\citep{dvoretzky1956asymptotic}]\label{lem: DKW}
   Let $F$ be a CDF and $F_n$ be the empirical CDF with $n$ samples. It holds with probability at least $1-\delta$ that 
   \begin{align*}
       \sup_{v\in \mathbb{R}} |F(v) - F_n(v)| \leq \sqrt{\frac{\log(2/\delta)}{2n}}. 
   \end{align*}
\end{lemma}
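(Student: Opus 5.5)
The plan is to reduce to the uniform case, split into two one-sided deviations, bound each by $e^{-2n\epsilon^2}$, and then invert the tail bound at level $\delta$. This is the classical Dvoretzky--Kiefer--Wolfowitz inequality with Massart's sharp constant, and in the paper it is simply invoked from the literature. If I had to supply a proof, the plan would be as follows.

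\textbf{Reduction to the uniform case.} Let $U_1,\dots,U_n$ be i.i.d.\ uniform on $[0,1]$ and set $X_i = F^-(U_i)$, so that $X_i \sim F$. Since $F^-(u)\le v \iff u \le F(v)$, the empirical CDF satisfies $F_n(v) = G_n(F(v))$, where $G_n$ is the uniform empirical CDF. Therefore
\[
\sup_v |F_n(v)-F(v)| = \sup_{u\in F(\mathbb{R})}|G_n(u)-u| \le \sup_{u\in[0,1]}|G_n(u)-u|.
\]
It thus suffices to prove the tail bound $\Pr\bigl(\sup_{u}|G_n(u)-u|>\epsilon\bigr)\le 2e^{-2n\epsilon^2}$. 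Setting the right-hand side equal to $\delta$ gives exactly $\epsilon=\sqrt{\log(2/\delta)/(2n)}$, which is the claimed statement.

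\textbf{Splitting into one-sided deviations.} A union bound gives
\[
\Pr\Bigl(\sup_u |G_n(u)-u|>\epsilon\Bigr)\le \Pr\Bigl(\sup_u (G_n(u)-u)>\epsilon\Bigr)+\Pr\Bigl(\sup_u (u-G_n(u))>\epsilon\Bigr).
\]
The map $U_i\mapsto 1-U_i$ shows the two one-sided suprema have the same law, up to the left/right-continuity convention, which does not affect the probability. So the task reduces to the one-sided bound
\[
\Pr\Bigl(\sup_u (u-G_n(u))>\epsilon\Bigr)\le e^{-2n\epsilon^2}.
\]
We may assume $e^{-2n\epsilon^2}\le 1/2$ throughout, since otherwise $2e^{-2n\epsilon^2}\ge 1$ and the claim is trivial.

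\textbf{The sharp one-sided bound.} Pointwise Hoeffding gives $e^{-2n\epsilon^2}$ at each fixed $u$, but taking the supremum over $u$ costs a polynomial factor. Removing that factor is the main obstacle. I would follow Massart. First, use the exact Smirnov--Birnbaum--Tingey formula
\[
\Pr\Bigl(\sup_u(u-G_n(u))>\epsilon\Bigr)=\epsilon\sum_{j=0}^{\lfloor n(1-\epsilon)\rfloor}\binom{n}{j}\Bigl(1-\epsilon-\tfrac{j}{n}\Bigr)^{n-j}\Bigl(\epsilon+\tfrac{j}{n}\Bigr)^{j-1}.
\]
This formula is obtained by decomposing on the last index $j$ at which the empirical process crosses the line $u-\epsilon$, together with a ballot-type argument. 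Next, compare each summand with a Gaussian-like profile by taking logarithms. Finally, bound the sum by an integral, showing it is at most $e^{-2n\epsilon^2}$ whenever $e^{-2n\epsilon^2}\le 1/2$.

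The delicate part is controlling the sum near its maximum, around $j\approx n/2$. Massart handles this with a careful convexity estimate on $\log$ of the summand. I expect this estimate to be the only technically demanding step.

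\textbf{Fallback.} If the sharp constant were not needed, a simpler route is available. Symmetrize with a ghost sample, note that half-lines have VC dimension $1$, and apply Hoeffding to the at most $n+1$ distinct patterns. This gives a bound of the form $C(n+1)e^{-cn\epsilon^2}$. The paper only uses the resulting $O\bigl(\sqrt{\log(T/\delta)/t}\bigr)$ rate, so this weaker version would affect \Cref{thm:reduction unknown} only through constants and an extra $\log T$ inside the logarithm.
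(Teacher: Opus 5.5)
The paper gives no proof of this lemma (it is quoted from the literature). Your outline is the standard route to exactly this constant: the quantile coupling $X_i=F^-(U_i)$ reduces to the uniform case, the reflection $U_i\mapsto 1-U_i$ reduces to a single one-sided tail, and Massart's sharp one-sided estimate under $e^{-2n\epsilon^2}\le 1/2$ finishes it. Your reduction steps are correct, and you defer the one hard estimate to Massart just as the paper defers the whole statement.

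You are also right that the factor $2$ in $2e^{-2n\epsilon^2}$, and hence the exact form $\sqrt{\log(2/\delta)/(2n)}$, is Massart's 1990 refinement rather than the original DKW bound, which has an unspecified constant. Your fallback remark is accurate too. With a ghost sample there are $2n$ points and up to $2n+1$ patterns, which only changes $C$. The weaker bound would alter only constants and the argument of the logarithm in \Cref{thm:reduction unknown}, provided $\alpha(t,\delta)$ in the construction of $\hF_t$ is redefined to match.
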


\section{Sublinear Regret does not Imply Strategic Robustness}\label{app:regret does not imply strategic robustness}
We use the following example presented in~\citep{kumar2024strategically}. 
\begin{example}\label{example:negative regret}
    Consider a single buyer whose value distribution is a smoothly truncated equi-revenue distribution starting at $\frac{1}{8}$, i.e.,
    \begin{align*}
        F(x) =\begin{cases}
            0  & \text{if } x \le \frac{1}{8}\\
            1-\frac{1}{8x} & \text{if } \frac{1}{8} < x < \frac{3}{4}\\
            \frac{1}{3} + \frac{2x}{3} &\text{if } x\ge \frac{3}{4}
        \end{cases}
    \end{align*}
    The possible bids are $b_0 = 0, b_1 = \frac{1}{8}$, and $b_2 = \frac{1}{4}$. We note that $\Myer(F) = \frac{1}{8}$ and this maximum revenue can be achieved by posting any reserve price $h \in [\frac{1}{8}, \frac{3}{4}]$. The probability density function of $F$ is 
    \begin{align*}
        f(x) = F'(x) = \begin{cases}
            0  & \text{if } x \le \frac{1}{8}\\
            \frac{1}{8x^2} & \text{if } \frac{1}{8} < x < \frac{3}{4}\\
            \frac{2}{3} &\text{if } x\ge \frac{3}{4}
        \end{cases}
    \end{align*}

    \paragraph{Reserve prices} Now we consider a sequence of decreasing reserve prices $\{h_t\}_t$ such that $h_t = b_2 = \frac{1}{4}$ for $t \le \frac{T}{2}$ and $h_t = b_1 = \frac{1}{8}$ for $t > \frac{T}{2}$. We consider the following three bidding strategies and compare their utilities.
\end{example}

\begin{proof}[Proof of \Cref{thm:negative-regret}] We use the instance in \Cref{example:negative regret}. We first calculate the utility of the best static bidding strategy. We then present a sequence of bidding strategy that has $-\Omega(T)$ regret but is not strategically robust.
    \paragraph{The best static bidding strategy} The best static bidding strategy $s^*$ is as follows:
    \begin{align*}
        s^*(v) = \begin{cases}
            0    & \text{if } v \le \frac{1}{8} \\
            \frac{1}{8}  & \text{if } \frac{1}{8} < v \le  \frac{3}{8} \\
            \frac{1}{4} & \text{if } v > \frac{3}{8}
        \end{cases} 
    \end{align*}
    This strategy $s^*$ is optimal among static strategies since (1) for any value $v \le \frac{1}{8}$, bidding $0$ is optimal as the value is always below the reserve price; (2) for any value $\frac{1}{8}<v < \frac{1}{4}$, bidding $\frac{1}{8}$ is optimal since the value is below the reserve price $\frac{1}{4}$ in the first half of the auctions; (3) for value $v \ge \frac{1}{4}$, bidding $\frac{1}{8}$ gives average utility $\frac{1}{2}(v -\frac{1}{8})$ (since it only wins in the second half of the auctions) while bidding $\frac{1}{4}$ gives average utility $v - \frac{1}{4}$, and we have $\frac{1}{2}(v -\frac{1}{8}) \ge v - \frac{1}{4}$ if and only if $v \le \frac{3}{8}$. 

    The utility of the best static strategy is 
    \begin{align*}
        \sum_{t=1}^T \-E[u(s^*\mid F, h_t)] & =\underbrace{\frac{T}{2} \cdot \int_{\frac{3}{8}}^1 \InParentheses{v- \frac{1}{4}} f(v) d(v)}_{t \le \frac{T}{2}} + \underbrace{\frac{T}{2} \cdot \int_{\frac{1}{8}}^{\frac{3}{8}} \InParentheses{v- \frac{1}{8}} f(v) d(v)}_{t > \frac{T}{2} \text{ and } \frac{1}{8} < v \le \frac{3}{8}} + \underbrace{\frac{T}{2} \cdot \int_{\frac{3}{8}}^{1} \InParentheses{v- \frac{1}{4}} f(v) d(v)}_{t > \frac{T}{2} \text{ and } v > \frac{3}{8}} \\
        &= T\cdot \InParentheses{\frac{\log 12}{16} + \frac{1}{48}} \approx T\cdot 0.17614.
    \end{align*}

    \paragraph{A bidding strategy with negative regret} 
    We define the following sequence of bidding strategies:
    \begin{align*}
         \quad s_t(v) =\begin{cases}
            0  & \text{if } v \le \frac{1}{4} \\
            \frac{1}{4} & \text{if } v > \frac{1}{4}
        \end{cases} \quad \text{for } t \le \frac{T}{2}, \quad s_t(v) = 
        \begin{cases}
            \frac{1}{8}  & \text{if } v \le \frac{1}{2} \\
            \frac{1}{4} & \text{if } v > \frac{1}{2}
        \end{cases} \quad \text{for } t > \frac{T}{2},
    \end{align*}
    We note that each $s_t$ is non-decreasing and left continuous. Now consider the algorithm $A:=\{A_t = s_t\}$.
    \begin{itemize}[leftmargin=*]
    \item \textbf{Utility:} 
    The utility of $\{s_t(\cdot)\}_t$ is 
     \begin{align*}
        \sum_{t=1}^T \-E[u(s_t\mid F, h_t)] & =\underbrace{\frac{T}{2} \cdot \int_{\frac{1}{4}}^1 \InParentheses{v- \frac{1}{4}} f(v) d(v)}_{t \le \frac{T}{2}} + \underbrace{\frac{T}{2} \cdot \int_{\frac{1}{8}}^{\frac{1}{4}} \InParentheses{v- \frac{1}{8}} f(v) d(v)}_{t > \frac{T}{2} \text{ and } \frac{1}{8} < v \le \frac{1}{2}} + \underbrace{\frac{T}{2} \cdot \int_{\frac{1}{2}}^{1} \InParentheses{v- \frac{1}{4}} f(v) d(v)}_{t > \frac{T}{2} \text{ and } v > \frac{1}{2}} \\
        &= T\cdot \InParentheses{\frac{\log 18}{16} + \frac{1}{192}} \approx T \cdot 0.18586.
    \end{align*}
    Thus the regret of $A$ is at most $T \cdot (\frac{\log 12}{16} + \frac{1}{48} -\frac{\log 18}{16} -\frac{1}{192} )\le - \frac{9T}{1000}$.
        \item \textbf{Revenue:} 
    The total payment of the sequence of strategies is 
    \begin{align*}
        \sum_{t=1}^T \-E[\Rev(A_t, h_t)] =\underbrace{\frac{1}{4} \cdot \InParentheses{1 - F\InParentheses{\frac{1}{4}}} \cdot \frac{T}{2}}_{t \le \frac{T}{2}} +  \underbrace{\frac{1}{4} \cdot \InParentheses{1 - F\InParentheses{\frac{1}{2}}} \cdot \frac{T}{2}}_{t > \frac{T}{2} \text{ and } v > \frac{1}{2}} +\underbrace{\frac{1}{8} \cdot  F\InParentheses{\frac{1}{2}} \cdot \frac{T}{2}}_{t > \frac{T}{2} \text{ and } v \le \frac{1}{2}} \ge \Myer(F) \cdot T + \frac{T}{64}.
    \end{align*}
    Thus, algorithm $A$ is not strategically robust.
    \end{itemize}
    This completes the proof.
\end{proof}

\section{Proof of \Cref{lemma:new concave formulation}}

\begin{proof}
    The first three items follow directly from \Cref{thm:concave formulation} and the one-to-one linear transformation between $\+P$ and $\+Q$. For item 4, note that since $F$ is continuous with $\mathrm{range}(F)=[0,1]$, we have $F(F^-(x)) = x$ for all $x \in [0,1]$. Thus we have
    \begin{align*}
        \-P_{v \sim F}(s(v) = b_{j-1}) = F\InParentheses{F^-\InParentheses{\sum_{\ell=1}^{i} q_\ell}} -F\InParentheses{F^-\InParentheses{\sum_{\ell=1}^{i-1} q_\ell}} = q_i, \forall j \in [K+1].
    \end{align*}
    Then we can apply item 3 and conclude $u(s_{\mq} \mid F, \md) = u(\mq\mid F, \md)$.
\end{proof}

\section{Proof of \Cref{corollary:linearized regret}}\label{app:corollary linearized regret}

\begin{proof}
    By definition, we can directly verify that
\begin{align*}
    \InAngles{\nabla_{\mq} u(\mq \mid F, h), \mq} = \InAngles{ \nabla_{\mp} u(\mp \mid F, h), \mp}, \forall h, \forall \mq \text{ and }  \mp = M_{\+Q \rightarrow \+P}(\mq).
\end{align*}
Moreover, let $\boldsymbol{0} \in \+P$ be the all-zero vector and define $\mq^0 = M_{\+P \rightarrow \+Q}(\boldsymbol{0}) = (1, 0, \ldots, 0) \in \Delta^{K+1}$. Then by item 2 in \Cref{lemma:new concave formulation},  $\mq^0$ satisfies 
\begin{align*}
    \InAngles{\nabla_{\mq} (\mq \mid F, h), \mq^0} = [\nabla(\mq \mid F, h)]_1 = 0, \forall h, \forall \mq \in \+Q.
\end{align*}
Combining the above gives
\begin{align*}
    \InAngles{\nabla_{\mq} (\mq \mid F, h), \mq^0 - \mq} = \InAngles{ \nabla_{\mp} u(\mp \mid F, h), \boldsymbol{0} -\mp}, \quad \forall h, \forall \mq \text{ and }  \mp = M_{\+Q \rightarrow \+P}(\mq).
\end{align*}
The proof is completed by invoking \Cref{thm:linearized regret}.
\end{proof}

\section{Proof of \Cref{lemma:robustness of u and Rev}}\label{app:lemma robustness}
\begin{proof}
    Since $s$ is non-decreasing and left-continuous, we recall the definition of value thresholds: $v_0= 0$, $v_{K+1}=1$, and $v_i:=\max\{v \in [0,1] \mid s(v) \le b_{i-1}\}$ for all $1\le i \le K$ (we set $v_i = 0$ if the set is empty). This implies $0 = v_0 \le v_1 \le v_2\le \ldots, \le v_K \le v_{K+1} = 1$ and $s(v) = b_i$ for $v \in (v_i, v_{i+1}]$. By Equation (2) in~\citep{kumar2024strategically}, we have the following identity:
    \begin{align*}
        u(s\mid F, \md) &= \sum_{i=0}^K d_i \cdot \InParentheses{ \int_{F(v_i)}^1 F^{-}(u)\cdot \mathrm{d}u - \sum_{j=i}^Kb_j\cdot (F(v_{j+1})-F(v_j))}\\
        &= \sum_{i=0}^K d_i \cdot \InParentheses{ \int_{F(v_i)}^1 F^{-}(u)\cdot \mathrm{d}u + b_i F(v_i) - b_K F(v_{K+1}) + \varepsilon \sum_{j=i+1}^{K} F(v_j)} \tag{$b_j = \varepsilon\cdot j$} \\
        &= \sum_{i=0}^K d_i \cdot \InParentheses{ \int_{F(v_i)}^1 F^{-}(u)\cdot \mathrm{d}u + b_i F(v_i) - b_K + \varepsilon \sum_{j=i+1}^{K} F(v_j)} \tag{$v_{K+1}=1$}.
    \end{align*}
    Then we have 
    \begin{align*}
        &u(s\mid F, \md) - u(s\mid F', \md)\\
        &= \sum_{i=0}^K d_i \cdot \InParentheses{ \int_{F(v_i)}^1 F^{-}(u)\cdot \mathrm{d}u - \int_{F'(v_i)}^1 (F')^{-}(u)\cdot \mathrm{d}u + b_i F(v_i) - F'(v_i) + \varepsilon \sum_{j=i+1}^K (F(v_j) - F'(v_j))} \\
        &\le \sum_{i=0}^K d_i \cdot \InParentheses{ \int_{F(v_i)}^1 |F^{-}(u) - (F')^-(u)|\cdot \mathrm{d}u + |F(v_i) - F'(v_i)| + b_i |F(v_i) - F'(v_i)| + \varepsilon \sum_{j=i+1}^K |F(v_j) - F'(v_j)|} \\
        &\le \int_{0}^1 |F^{-}(u) - (F')^-(u)|\cdot \mathrm{d}u + 3\InNorms{F-F'}_\infty \\
        &\le \int_{0}^1 |F(u) - (F')(u)|\cdot \mathrm{d}u + 3\InNorms{F-F'}_\infty \tag{\Cref{lem: wass}} \\
        &\le 4\InNorms{F-F'}_\infty,
    \end{align*}
    where we use $F^-(x) \in [0,1]$ in the first inequality; we use $b_i \in [0,1]$ and $\varepsilon \cdot K \le 1$ in the second inequality. 

    For the revenue, we have 
    \begin{align*}
        \Rev(s \mid F, \md) = \sum_{i=0}^K d_i \cdot \InParentheses{\sum_{j=i}^K b_j \cdot \-P_{v\sim F}(s(v) = b_j)} = \sum_{i=0}^K d_i \cdot \InParentheses{\sum_{j=i}^K b_j \cdot (p_j - p_{j+1})}.
    \end{align*}
    We note that this term appears in $u(s \mid F, \md)$ above. By the same analysis, we can conclude $|\Rev(s\mid F, \md) - \Rev(s \mid F', \md)| \le 2 \InNorms{F-F'}_\infty$.

    For $\mq \in \+Q$, by the utility definition in \eqref{eq:u_Q}
    \begin{align*}
        u(\mq \mid F, \md) = \sum_{i=0}^K d_i \cdot \InParentheses{ \int_{\sum_{\ell=1}^{i} q_{\ell}}^1 F^{-}(u)\cdot u - \sum_{j=i}^K b_j\cdot q_{j+1}},
    \end{align*}
    we have 
    \begin{align*}
        |u(\mq \mid F, \md)-u(\mq \mid F', \md)| &\le \sum_{i=0}^K d_i \cdot \int_{\sum_{\ell=1}^{i} q_{\ell}}^1 | F^{-}(u) - (F')^-(u)|\cdot \mathrm{d}u\\
        &\le \int_{0}^1 | F^{-}(u) - (F')^-(u)|\cdot \mathrm{d}u \\
        &= \int_{0}^1 |F(u) - (F')(u)|\cdot \mathrm{d}u \tag{\Cref{lem: wass}} \\
        &\le \InNorms{F - F'}_\infty.
    \end{align*}
    Moreover,
    \begin{align*}
        |u(\mq \mid F, \md)-u(\mq' \mid F, \md)| &\le \sum_{i=0}^K d_i \cdot \InParentheses{ \left|\int_{\sum_{\ell=1}^i q_\ell}^1 F^-(u) \cdot \mathrm{d}u - \int_{\sum_{\ell=1}^i q'_\ell}^1 F^-(u) \cdot \mathrm{d}u\right| + \sum_{j=i}^K b_j |q_{j+1} - q'_{j+1}|} \\
        &\le 2\InNorms{\mq - \mq'}_1,
    \end{align*}
    where we use $F^-(x) \in [0,1]$ and $b_j \in [0,1]$ in the second inequality.
 
    Recall the definition of $\Rev(\mq \mid \md)$:
    \begin{align*}
        \Rev(\mq \mid \md):= \sum_{i=0}^k d_i\cdot \InParentheses{\sum_{j=i}^K b_j\cdot  q_{j+1}}.
    \end{align*}
    Since $b_j \in [0,1]$ for all $0 \le j\le K$, we have
    \begin{align*}
        |\Rev(\mq \mid \md) - \Rev(\mq' \mid \md)| \le \sum_{j=0}^K b_j \cdot |q_{j+1} - q'_{j+1}| \le \InNorms{\mq - \mq'}_1.
    \end{align*}
    This completes the proof.
\end{proof}

\end{document}